\newtheorem{theorem}{Theorem}
\newtheorem{corollary}[theorem]{Corollary}
\newtheorem{lemma}[theorem]{Lemma}
\newtheorem{observation}{Observation}
\newtheorem{proposition}[theorem]{Proposition}
\theoremstyle{definition}
\newtheorem{definition}[theorem]{Definition}
\newtheorem*{remark}{Remark}
\theoremstyle{plain}
\crefname{question}{Question}{Questions}
\DeclareMathOperator{\E}{\mathrm{E}}
\DeclareMathOperator{\poly}{poly}
\DeclareMathOperator*{\argmax}{arg\,max}
\newcommand{\cD}{\ensuremath{\mathcal{D}}}
\newcommand{\cQ}{\ensuremath{\mathcal{Q}}}
\newcommand{\cS}{\ensuremath{\mathcal{S}}}
\newcommand{\qkCol}{\ensuremath{(q,k)}\textnormal{-\textsc{Colorability}}\xspace}
\newcommand{\rhoIndepSet}{\ensuremath{\rho}\textnormal{-\textsc{IndepSet}}\xspace}
\newcommand{\qkSAT}{\ensuremath{(q,k)}\textnormal{-\textsc{SAT}}\xspace}
\newcommand{\qtwokSAT}{\ensuremath{(2,k)}\textnormal{-\textsc{SAT}}\xspace}
\newcommand{\kSHPP}{\ensuremath{k}\textnormal{-\textsc{SHPP}}\xspace}
\title{New Graph and Hypergraph Container Lemmas with Applications in Property Testing}
\author{Eric Blais \and Cameron Seth}
\begin{document}
\maketitle

\begin{abstract}
The graph and hypergraph container methods are powerful tools with a wide range of applications across combinatorics.
Recently, Blais and Seth (FOCS 2023) showed that the graph container method is particularly well-suited for the analysis of the natural canonical tester for two fundamental graph properties: having a large independent set and $k$-colorability.
In this work, we show that the connection between the container method and property testing extends further along two different directions.

First, we show that the container method can be used to analyze the canonical tester for many other properties of graphs and hypergraphs.
We introduce a new hypergraph container lemma and use it to give an upper bound of $\widetilde{O}(kq^3/\epsilon)$ on the sample complexity of $\epsilon$-testing satisfiability, where $q$ is the number of variables per constraint and $k$ is the size of the alphabet.
This is the first upper bound for the problem that is polynomial in all of $k$, $q$ and $1/\epsilon$.
As a corollary, we get new upper bounds on the sample complexity of the canonical testers for hypergraph colorability and for every semi-homogeneous graph partition property.

Second, we show that the container method can also be used to study the query complexity of (non-canonical) graph property testers.
This result is obtained by introducing a new container lemma for the class of all \emph{independent set stars}, a strict superset of the class of all independent sets.
We use this container lemma to give a new upper bound of $\widetilde{O}(\rho^5/\epsilon^{7/2})$ on the query complexity of $\epsilon$-testing the $\rho$-independent set property.
This establishes for the first time the non-optimality of the canonical tester for a non-homogeneous graph partition property.
\end{abstract}

\section{Introduction}
Many combinatorial problems can be tackled with a brute-force approach that proceeds by enumerating all the independent sets in a graph or a hypergraph.
However, this brute-force approach is generally of limited utility because a graph on $n$ vertices may have up to $2^n$ independent sets.
The \emph{graph} and \emph{hypergraph container methods} are general tools that circumvent this limitation in many settings.
In particular, for many natural classes of graphs, such as regular graphs, nearly regular graphs, random graphs, or graphs that are sufficiently dense, graph container lemmas show that for any graph $G=(V,E)$ in the class, there exists a relatively small collection of \emph{containers}, each of which is a subset of $V$ and of size much smaller than $|V|$, such that every independent set is contained in at least one of these containers.

The graph container method was originally introduced by Kleitman and Winston~\cite{kleitmanWinston1982} to bound the number of square-free graphs.
Later, Sapozhenko (see~\cite{Sapozhenko05} and the references therein) showed that the graph container method can also be used to resolve many other enumeration problems.
More recently, Balogh, Morris, and Samotij~\cite{baloghIndependentSetsHypergraphs2015} and Saxton and Thomason~\cite{saxtonThomasonHypergraphContainers2015} extended the container method to hypergraphs.
The resulting hypergraph container method has since been used to obtain a number of striking results throughout combinatorics.
(See the excellent survey~\cite{baloghMorrisSamotijContainersSurvey2018} for an introduction to the method and its various applications.)
The graph and hypergraph container methods have also recently found uses in algorithmic settings for counting independent sets in bipartite graphs~\cite{JenssenPP23} and for obtaining faster exact algorithms for many NP-complete problems on almost-regular graphs~\cite{Zamir23}.

Recently, we~\cite{blaisSeth} showed that there is also a connection between the graph container method and graph property testing.
The connection was shown via canonical property testers, which are bounded-error randomized algorithms that aim to distinguish between graphs having some property and graphs that are $\epsilon$-far from the property by sampling a uniformly random set of vertices $S$ and only inspecting the induced subgraph $G[S]$.
(See \cref{sect:prelim} for the formal definitions.)
We used the graph container method to show new upper bounds on the number of vertices that must be sampled by canonical testers for two classic graph properties: having a large independent set (or a large clique) and $k$-colorability.

The results in~\cite{blaisSeth} raise the question of whether the container method has other applications in property testing.
In particular, can the graph or hypergraph container method be used to obtain strong bounds on the sample complexity of the canonical tester for other property testing problems?
In another direction, the results of \cite{blaisSeth} only use the container method to analyze canonical testers.
In general, however, we are interested in non-canonical testers that can query individual (potential) edges to determine if they are in the graph or not. 
Can the graph container method also be used to analyze non-canonical testers, where the measure of interest is the query complexity instead of the sample complexity?

\subsection{Results}

\paragraph{Testing satisfiability and related properties.}
In our first result we establish a new hypergraph container lemma and show that it can be used to obtain new bounds on the sample complexity for the problem of \emph{satisfiability testing}.

In the satisfiability testing problem, which we denote by \qkSAT, the input is a constraint satisfaction problem $\phi$ that contains $n$ constraints, each of which contains $q$ variables that take values from an alphabet of size $k$.
The task of a tester is to distinguish the case where $\phi$ is satisfiable and the case where at least $\epsilon \binom{n}{q}$ constraints must be removed from $\phi$ to make it satisfiable.
The canonical tester samples a set $S$ of variables and accepts if the restriction of $\phi$ to the set $S$ of variables is satisfiable.
This tester has \emph{one-sided error}, in that it always accepts when $\phi$ is satisfiable.
We show that the canonical tester for satisfiability is efficient in a strong sense.

\begin{theorem}
\label{thm:sat}
The sample complexity of $\epsilon$-testing the \qkSAT property is $\widetilde{O}(\frac{k q^3}{\epsilon})$.\footnote{Here and throughout the article, we use $\widetilde{O}(\cdot)$ and $\widetilde{\Omega}(\cdot)$ notation to hide terms that are polylogarithmic in the argument. See \Cref{sect:testing-sat,sect:testing-clique-queries} for the precise formulation of the main theorems.}
\end{theorem}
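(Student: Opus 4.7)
The plan is to analyze the canonical tester and show that when $\phi$ is $\epsilon$-far from satisfiable, its restriction $\phi[S]$ to a uniformly random sample $S \subseteq [n]$ of size $s = \widetilde{O}(kq^3/\epsilon)$ is unsatisfiable with probability at least $2/3$; one-sided completeness is automatic. First, I would encode $\phi$ as a $q$-uniform hypergraph $H$ on vertex set $V = [n] \times [k]$, partitioned into columns $V_i = \{i\} \times [k]$: for every constraint $c$ on variables $(x_{i_1},\ldots,x_{i_q})$ with violating tuples $B_c \subseteq [k]^q$, each $(v_1,\ldots,v_q) \in B_c$ contributes the hyperedge $\{(i_1, v_1), \ldots, (i_q, v_q)\}$. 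Satisfying assignments of $\phi$ then correspond exactly to \emph{transversal independent sets} (TIS) of $H$ --- sets meeting each column in one vertex and containing no hyperedge --- and the $\epsilon$-farness hypothesis becomes the statement that every transversal of $V$ hits hyperedges from at least $\epsilon\binom{n}{q}$ distinct constraints of $\phi$.

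Next, I would invoke the new hypergraph container lemma introduced in this paper, applied to the column-partitioned $H$, to obtain a family $\cC$ of containers $C = (C_1,\ldots,C_n)$ with $C_i \subseteq [k]$ satisfying three key properties: (i) $\log|\cC| = \widetilde{O}(kq)$ with no direct $n$-dependence, which is the crucial feature allowing the sample complexity to be $n$-independent; (ii) every TIS of $H$ is \emph{respected} by some $C \in \cC$, meaning $\alpha(i) \in C_i$ for all $i$; and (iii) each container has a quantifiable per-column deficiency, for instance $\sum_i (k - |C_i|) \geq \delta k n$ for some $\delta = \widetilde{\Omega}(1/q^2)$. For each fixed $C \in \cC$, the per-container analysis argues that the combination of $C$'s deficiency and the $\epsilon$-farness of $\phi$ forces at least $\Omega(\epsilon\binom{n}{q}/q)$ constraints of $\phi$ to be violated by \emph{every} assignment respecting $C$; intuitively, missing values in deficient columns propagate outward to block all container-consistent assignments to constraints touching those columns. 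A random sample $S$ of size $s$ then hits at least one such constraint except with probability $\exp(-\Omega(s\epsilon/q^2))$.

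A union bound over $C \in \cC$ yields a total failure probability of at most $\exp\bigl(\widetilde{O}(kq) - \Omega(s\epsilon/q^2)\bigr)$, which is at most $1/3$ precisely when $s = \widetilde{O}(kq^3/\epsilon)$, matching the theorem. The hard part will be Step 2: the standard hypergraph container machinery of Balogh--Morris--Samotij and Saxton--Thomason produces roughly $\exp(\widetilde{O}(n))$ containers, which would force $s$ to scale with $n$. Removing this $n$-dependence appears to require a new container lemma that exploits the column structure of $H$ and iteratively refines containers until each is described by an $\widetilde{O}(kq)$-bit ``fingerprint'' (a short list of representative constraint-value restrictions) rather than an $n$-dimensional profile, while simultaneously retaining the per-column deficiency needed in Step 3. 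A secondary obstacle is the deficit-to-violated-constraints implication in Step 3, which must avoid losing extra factors of $k$ or $q$ that would push the sample complexity above the claimed $\widetilde{O}(kq^3/\epsilon)$ bound; I expect this to go through a pigeonhole on deficient columns combined with a double-counting of constraints incident to them.
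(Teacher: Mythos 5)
Your encoding of $\phi$ as a $q$-uniform hypergraph on $[n]\times[k]$ and the general plan of ``container lemma plus union bound'' match the paper's setup, but the heart of your argument rests on an object the paper neither constructs nor needs, and whose existence you do not justify. You posit a container family $\cC$ with $\log|\cC| = \widetilde{O}(kq)$, independent of $n$, and you correctly flag this as the hard part --- but the paper takes a different route: its containers are indexed by \emph{fingerprints}, subsets of the independent set itself of size $t(q-1)$ with $t \le 8kq/\epsilon$, so the global collection has size roughly $(nk)^{\widetilde{O}(kq^2/\epsilon)}$ and is very much $n$-dependent. The $n$-independence of the sample complexity comes not from a small container count but from the fact that the relevant fingerprint must itself consist of sampled variables (together with an assignment), so the union bound ranges only over $\binom{s}{t(q-1)}\cdot k^{t(q-1)}$ choices; moreover the lemma provides a trade-off --- container deficiency growing linearly in the fingerprint parameter $t$ --- which is exactly what makes that union bound converge. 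Your fixed deficiency $\delta = \widetilde{\Omega}(1/q^2)$ with no size/deficiency trade-off, and your hope for an $n$-free $\cC$, do not reproduce this mechanism, and there is no evidence that a collection with $\log|\cC| = \widetilde{O}(kq)$ exists at all.

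The detection step also differs in a way that breaks. The paper's lemma controls ${\rm vars}(C_t)$, the number of variables appearing at all in the container: an $\widetilde{\Omega}(\epsilon t/(kq^2))$ fraction of variables have \emph{no} admissible value, so each additional sampled variable falls outside the container with that probability, giving the clean $\left(1-\frac{\epsilon t}{4kq^2\ln(kq/\epsilon)}\right)^{s-t(q-1)}$ factor. Your deficiency condition $\sum_i (k-|C_i|)\ge \delta kn$ is weaker (every column could remain nonempty), so you instead argue that every container-respecting assignment violates $\Omega(\epsilon\binom{n}{q}/q)$ constraints and that the sample ``hits'' one. This has a quantifier problem: the violated constraints depend on the full assignment, whereas rejection requires that \emph{every} assignment to the sampled variables violate some constraint entirely contained in the sample; and the claimed failure probability $\exp(-\Omega(s\epsilon/q^2))$ for the sample to contain all $q$ variables of some violated constraint is unsubstantiated --- capturing a whole $q$-tuple is a far rarer event than a per-variable event, and concentration of that form would need a second-moment or Janson-type argument you do not supply, with the risk of losing the very factors of $q$ and $k$ you are trying to control. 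The paper sidesteps both issues by making the per-variable event (a sampled variable lying outside ${\rm vars}(C_t)$) the only probabilistic ingredient, so as it stands your proposal has a genuine gap in both of its two main steps.
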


Satisfiability testing was first studied by Alon and Shapira~\cite{alon2003testing} who showed that the sample complexity of the canonical tester for $(q,k)$-SAT is independent of $n$ and polynomial in $1/\epsilon$.
Specifically, they showed that its sample complexity is roughly $2^{k^{2q}} / \epsilon^2$.
Sohler~\cite{Sohler12} improved the bounds to show that the sample complexity of the canonical tester is roughly $k^{3q} / \epsilon$, and in particular has only a singly exponential dependence on $q$ and a linear dependence on $1/\epsilon$.
\Cref{thm:sat} is the first result to show that the sample complexity for satisfiability testing is polynomial in all three of $q$, $k$, and $1/\epsilon$, and that the dependence on $k$ is linear for all possible values of $q$.
Note that the linear dependence on $1/\epsilon$ in the theorem is optimal, but we do not know whether the bound is optimal in terms of $k$ or $q$ or not.

Sample complexity bounds for testing satisfiability can be used to obtain bounds on the sample complexity for testing $k$-colorability of hypergraphs as well~\cite{alon2003testing,Sohler12}.
In this problem, denoted \qkCol, the input is a $q$-uniform hypergraph $H$ on $n$ vertices and the tester must distinguish the case where $H$ is $k$-colorable from the one where at least $\epsilon \binom{n}{q}$ edges need to be removed to make $H$ $k$-colorable.
The sample complexity bound for testing satisfiability implies a corresponding bound for the sample complexity of the natural canonical tester for $k$-colorability.

\begin{corollary}
\label{cor:k-col}
The sample complexity of $\epsilon$-testing the \qkCol property is $\widetilde{O}(\frac{kq^3}{\epsilon})$.
\end{corollary}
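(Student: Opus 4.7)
The plan is to derive \Cref{cor:k-col} directly from \Cref{thm:sat} via the standard reduction from hypergraph $k$-colorability to constraint satisfiability, which preserves the sample complexity of the canonical tester. Given an input $q$-uniform hypergraph $H = (V, E)$ on $n$ vertices, I would construct a $(q,k)$-CSP instance $\phi_H$ whose variables are the vertices in $V$ (taking values in the color alphabet $[k]$) and whose constraints are in bijection with the hyperedges of $H$: for each hyperedge $e = \{v_1, \dots, v_q\}$, the associated constraint $C_e$ is satisfied by an assignment $\sigma : V \to [k]$ if and only if $\sigma(v_1), \dots, \sigma(v_q)$ are not all equal. By construction, $\phi_H$ is satisfiable if and only if $H$ admits a proper $k$-coloring.

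The second step is to verify that the reduction preserves the $\epsilon$-farness parameter. Because removing a hyperedge $e$ from $H$ corresponds exactly to removing the constraint $C_e$ from $\phi_H$, for any assignment $\sigma$ the number of hyperedges of $H$ that are monochromatic under $\sigma$ equals the number of constraints of $\phi_H$ that $\sigma$ violates. Consequently, $H$ is $\epsilon$-far from \qkCol if and only if $\phi_H$ is $\epsilon$-far from \qkSAT.

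Finally, I would match up the canonical testers. On sample set $S \subseteq V$, the canonical tester for \qkCol accepts if and only if $H[S]$ is $k$-colorable, which occurs exactly when the restriction of $\phi_H$ to the variables in $S$ is satisfiable; so the two canonical testers are operationally identical on the same sample set. Invoking \Cref{thm:sat} on $\phi_H$ then yields the claimed $\widetilde{O}(kq^3/\epsilon)$ bound. The proof is essentially a black-box application of \Cref{thm:sat}, and the only step requiring care is confirming that the bijection between hyperedges and constraints transfers the $\epsilon$-farness parameter without any multiplicative loss, which it does since the hyperedge count $|E|$ and constraint count of $\phi_H$ coincide.
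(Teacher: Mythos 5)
Your proposal is correct and follows essentially the same route as the paper: the same vertex-to-variable, hyperedge-to-``not all equal'' constraint reduction, the same observation that the bijection between hyperedges and constraints transfers $\epsilon$-farness exactly (both distances being measured against $\epsilon\binom{n}{q}$), and the same simulation of the canonical colorability tester by the canonical \qkSAT tester from \Cref{thm:sat}.
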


For $q=2$, the bound in \cref{cor:k-col} recovers the best-known sample complexity bound for testing $k$-colorability of graphs from~\cite{blaisSeth}, but via a different argument that, as far as we can tell, uses the graph container method in fundamentally different ways.
Furthermore, \cref{cor:k-col} extends the colorability testing result of~\cite{blaisSeth} to show that $\widetilde{O}(k/\epsilon)$ samples also suffice to test $k$-colorability of $q$-hypergraphs for any constant $q$.
In the general setting, Alon and Shapira~\cite{alon2003testing} showed that the sample complexity of the canonical $k$-colorability tester is at most $(qk)^{O(q)}/\epsilon^2$.\footnote{The prior results on testing colorability of hypergraphs use a different measure of distance which give different sample complexity bounds, but these are converted to use our definition here. See the remark in \cref{sect:prelim} for more details.}
This bound was improved to $(qk)^{O(q)}/\epsilon$ by Sohler~\cite{Sohler12} and to $q^{O(q)} k^2/\epsilon^2$ by Czumaj and Sohler~\cite{czumaj2001testing}.
The bound in \cref{cor:k-col} improves on both of these incomparable results and is the first to show that $k$-colorability testing has polynomial dependence on $q$ and linear dependence on $k.$

\Cref{thm:sat} also implies new sample complexity bounds for testing all \emph{semi-homogeneous partition properties}, a class of graph properties first studied by Nakar and Ron~\cite{NakarRon2018} (where they are called \emph{0--1 graph partition properties}) and by Fiat and Ron~\cite{fiat2021efficient}.
This is the class of graph partition properties where between two parts the edge density is either $0,1$ or has no restrictions.
(See \cref{sect:prelim} for the formal definitions.)
This is a natural class of graph properties, which includes $k$-colorability, being a biclique, and many other properties as well. 
This class is also interesting in its own right, as it has connections to efficient distance approximation~\cite{fiat2021efficient} and it is exactly the class of graph partition properties that can be tested with $\poly(1/\epsilon)$ queries with one-sided error \cite{NakarRon2018}.
Let \kSHPP denote the set of semi-homogeneous partition properties with exactly $k$ parts. 
\Cref{thm:sat} implies the following bound on the sample complexity of the natural canonical tester for semi-homogeneous partition properties.

\begin{corollary}
\label{cor:01-partition}
The sample complexity of $\epsilon$-testing any property in \kSHPP is $\widetilde{O}\left(\frac{k}{\epsilon}\right)$.
\end{corollary}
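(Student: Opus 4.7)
The plan is to reduce testing any $\Pi \in \kSHPP$ to testing $\qtwokSAT$ and then invoke \Cref{thm:sat} with $q = 2$, which immediately gives the desired $\widetilde{O}(k/\epsilon)$ sample complexity. Encode $\Pi$ by a matrix $M \in \{0, 1, *\}^{k \times k}$ whose $(i,j)$ entry specifies whether, between parts $i$ and $j$, no edge is allowed, every edge is required, or the density is unrestricted. Given an input graph $G = (V,E)$ on $n$ vertices, I would form the CSP $\phi_G$ whose variables are $V$, whose alphabet is $[k]$, and which contains, for every pair $\{u,v\} \subseteq V$, exactly one binary constraint: if $uv \in E$, forbid the color pairs $(i,j)$ with $M_{ij} = 0$; otherwise forbid those with $M_{ij} = 1$.

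By construction, a satisfying assignment of $\phi_G$ is precisely a partition of $V$ witnessing $G \in \Pi$, and the same equivalence holds on every induced subgraph $G[S]$. Hence the canonical tester for $\Pi$, which samples $S$ and accepts when $G[S] \in \Pi$, is exactly the canonical satisfiability tester applied to $\phi_G$ on the same sample. For the distance-preservation step, for any assignment $c \colon V \to [k]$ the set of pairs whose constraints $c$ violates equals the set of edges of $G$ that must be flipped to make $c$ a valid $\Pi$-partition; consequently the minimum number of constraint removals needed to make $\phi_G$ satisfiable equals the edit distance from $G$ to $\Pi$, and $G$ is $\epsilon$-far from $\Pi$ iff $\phi_G$ is $\epsilon$-far from being satisfiable (both normalised by $\binom{n}{2}$). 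Plugging $q = 2$ into \Cref{thm:sat} then gives sample complexity $\widetilde{O}(k \cdot 2^3 / \epsilon) = \widetilde{O}(k/\epsilon)$.

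The step that I expect to need the most care is this distance-preservation argument: in general a CSP ``constraint removal'' is a more flexible repair operation than a single edge flip, so one must verify that the encoding puts violated constraints in bijection with incorrectly-placed edges. Here this follows cleanly because each unordered pair $\{u,v\}$ contributes exactly one binary constraint whose satisfaction under a fixed assignment is determined solely by the edge status of $\{u,v\}$, but this is the only non-routine ingredient beyond the reduction itself and \Cref{thm:sat}.
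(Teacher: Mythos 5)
Your proposal is correct and takes essentially the same route as the paper: the same pairwise encoding of $\Pi$ into a $(2,k)$-CSP (the paper's sets $\Pi_0,\Pi_1$ are exactly your matrix entries $M_{ij}\in\{0,1,*\}$), the same observation that violated constraints under an assignment coincide with the edge flips needed to make the corresponding partition valid, and an application of \Cref{thm:sat} with $q=2$. The only minor discrepancy is normalization: the paper measures graph distance by $\epsilon n^2$ edges while CSP distance uses $\epsilon\binom{n}{2}$ constraints, so only the one-directional implication (via $\epsilon n^2 > \epsilon\binom{n}{2}$) is needed rather than your claimed equivalence with both sides normalized by $\binom{n}{2}$; this does not affect correctness.
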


The bound in \cref{cor:01-partition} again recovers the best-known bound on the sample complexity for testing $k$-colorability of graphs.
In general, it improves on both of the incomparable previous best bounds on the order of $k \log k /\epsilon^2$~\cite{NakarRon2018} and $k^6/\epsilon$~\cite{Sohler12} on the sample complexity for testing \kSHPP properties with one-sided error.

\paragraph{Query complexity of testing independent sets.}
In our second result we use the graph container method to analyze the query complexity of a non-canonical tester for the \rhoIndepSet property, which is the set of all graphs on $n$ vertices that contain an independent set on $\rho n$ vertices.
Note that testing the \rhoIndepSet property is equivalent to testing the \textsc{$\rho$-Clique} property by considering the complement of the graph.

The optimal canonical $\epsilon$-tester for the \rhoIndepSet property has sample complexity $\widetilde{\Theta}(\rho^3/\epsilon^2)$~\cite{blaisSeth,feigeCliqueTesting2004}.
The query complexity of the canonical tester is simply the square of its sample complexity, $\widetilde{\Theta}(\rho^6/\epsilon^4).$
In particular, when $\rho = \frac12$ and $\epsilon = 1/\sqrt{n}$, the canonical tester must sample a constant fraction of the graph, and so it does not have sublinear query complexity.
Blais and Seth~\cite{blaisSeth} asked specifically whether there exists a general tester with sublinear query complexity for this regime.

Our next result shows that there is a non-adaptive $\epsilon$-tester for $\rho$-independent sets with query complexity that is polynomially smaller than that of the canonical tester and, in particular, that can $\frac1{\sqrt{n}}$-test the $\frac12$-independent set property with sublinear query complexity.

\begin{theorem}
\label{thm:cliques-queries}
The query complexity of $\epsilon$-testing the \rhoIndepSet property is $\widetilde{O}(\frac{\rho^5}{\epsilon^{7/2}})$.
\end{theorem}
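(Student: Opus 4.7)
My plan is to design a non-canonical $\epsilon$-tester for the \rhoIndepSet property that avoids the quadratic cost of probing every edge within its sample. The tester draws two random vertex sets: a large set $S$ of size $s_1 = \widetilde{\Theta}(\rho^3/\epsilon^2)$, matching the canonical sample complexity, and a smaller set $T$ of size $s_2 = \widetilde{\Theta}(\rho^2/\epsilon^{3/2})$. It queries only the $\binom{s_2}{2} + s_1 s_2 = \widetilde{O}(\rho^5/\epsilon^{7/2})$ potential edges with at least one endpoint in $T$, and accepts iff the revealed edges exhibit a large \emph{independent set star}: an independent set $C \subseteq T$ together with a set $L \subseteq S$ of common non-neighbors of $C$, with $|C| + |L| \geq (\rho - \Theta(\epsilon))(s_1 + s_2)$. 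Completeness is then immediate from a Chernoff bound: if $G$ contains an independent set $I$ of size at least $\rho n$, then with high probability the pair $(C,L) := (T \cap I,\, (S \setminus T) \cap I)$ satisfies the acceptance condition, since every vertex of $I$ is non-adjacent to every other vertex of $I$.

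The soundness direction is the heart of the proof, and it is where the new container lemma for independent set stars enters. The lemma I would aim for produces, for any graph $G$ and target size $k$, a small family of container pairs $\{(A_i, B_i)\}_{i \le m}$ such that every IS star $(C, L)$ with $|C| + |L| \ge k$ is captured by some $i$ with $C \subseteq A_i$ and $L \subseteq B_i$, and moreover such that $|A_i|$ and $|B_i|$ are both substantially smaller than $n$ whenever $G$ is $\epsilon$-far from \rhoIndepSet. With this lemma in hand, the tester's soundness follows by a union bound: for each fixed container pair, $\Pr[T \subseteq A_i \text{ and } S \subseteq B_i]$ is exponentially small in the shrinkage parameters, and summing this probability over all containers keeps the overall false-acceptance probability below $1/3$, provided $s_1$ and $s_2$ are chosen large enough relative to the container parameters.

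The principal technical obstacle will be establishing this container lemma for IS stars with the correct quantitative tradeoff. The classical Kleitman--Winston argument cannot be applied out of the box, since the leaf set $L$ has no pairwise non-adjacency constraint and, in particular, the entire non-neighborhood of any single vertex is already an IS star. I plan to proceed via a two-phase iterative encoding: first, run an analog of the standard graph-container algorithm to pin the core $C$, repeatedly branching on whether the current max-degree vertex belongs to $C$ and thereby placing $C$ inside some small $A_i$; then, within the subgraph induced on the common non-neighbors of $A_i$, run a weaker procedure to pin the leaves $L$ inside a corresponding $B_i$. The delicate step is balancing the bits spent on the two phases so that the product $|A_i| \cdot |B_i|$ matches the target $\widetilde{O}(\rho^5/\epsilon^{7/2})$, while simultaneously arguing that $\epsilon$-farness of $G$ from \rhoIndepSet forces both container dimensions to shrink. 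This two-dimensional tradeoff, unavailable in the single-container analysis from~\cite{blaisSeth}, is precisely what enables the polynomial improvement over the canonical tester's $\widetilde{\Theta}(\rho^6/\epsilon^4)$ query complexity.
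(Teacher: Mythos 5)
Your tester has the same architecture as the one in the paper (a small core sample, a large sample, queries restricted to pairs touching the small sample, acceptance based on finding an independent set star), but your acceptance rule is too weak and the resulting tester is not sound. Requiring only $|C|+|L| \ge (\rho-\Theta(\epsilon))(s_1+s_2)$, with no lower bound on $|C|$ relative to $|T|$, lets a tiny core carry the whole threshold: as literally stated even $C=\emptyset$, $L=S$ passes, and even insisting on a nonempty core, take $\rho=\tfrac12$ and the graph consisting of a clique on $(1-\epsilon)n$ vertices plus $\epsilon n$ isolated vertices. This graph is $\epsilon$-far from \rhoIndepSet for small $\epsilon$, yet with high probability $T$ contains an isolated vertex $v$ (since $\epsilon s_2 \gg 1$), and then $C=\{v\}$, $L=S$ satisfies $|C|+|L| \ge (\tfrac12-\Theta(\epsilon))(s_1+s_2)$, so your tester accepts. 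The fix is the acceptance condition used in the paper: demand an independent set of size $\rho|T|$ inside the small sample \emph{and} $\rho|S|$ common non-neighbors inside the large sample, i.e.\ proportional thresholds on both parts. Relatedly, your soundness sketch bounds $\Pr[T \subseteq A_i \text{ and } S \subseteq B_i]$, but acceptance does not imply that the samples are contained in the containers; the event you must bound is that $T$ contains the fingerprint plus roughly $\rho|T|$ vertices of the inner container while $S$ contains roughly $\rho|S|$ vertices of the outer container, which is controlled by hypergeometric Chernoff bounds.

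The more substantial gap is in the container lemma itself. As you observe, the leaves satisfy no mutual constraints, and this is fatal to your second phase: once the pinned part of the core is fixed, \emph{every} vertex in its common non-neighborhood is a legitimate leaf, so fingerprint bits spent inside $L$ cannot shrink $B_i$ below that common non-neighborhood. The paper's mechanism is different: the fingerprint consists only of core vertices, chosen greedily by degree both in the current inner container and in the current outer container, and the outer container is just the set of vertices with no edges to the fingerprint; an outer-container shrinking lemma (\Cref{lem:containers-shrinking2}) driven by $\epsilon$-farness shows this set contracts geometrically while the inner container is still large. The resulting statement (\Cref{lem:GCL-indepset-queries}) is a dichotomy, not a product trade-off: for every core there is $t \le \widetilde{O}(\rho^2/\epsilon)$ such that either $t \ge \widetilde{\Omega}(\rho/\sqrt{\epsilon})$ and the \emph{inner} container has size at most $\bigl(\rho - \widetilde{\Omega}(t\epsilon/\rho)\bigr)n$, or $t$ is below that threshold and the \emph{outer} container already satisfies the same bound. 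The asymmetric sample sizes $|S|\gg|T|$ exist precisely to exploit this dichotomy: small fingerprints contribute few union-bound terms but only give the outer-container guarantee, compensated by the larger sample $S$; large fingerprints contribute $\binom{|T|}{2t}$ terms but give the inner-container guarantee with a Chernoff exponent growing like $t^2$. Your plan to balance bits "so that $|A_i|\cdot|B_i|$ matches $\widetilde{O}(\rho^5/\epsilon^{7/2})$" also conflates container sizes (which are constant fractions of $n$) with the query budget; without the dichotomy above, the union bound does not close, so as written the soundness analysis would not go through.
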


By definition, the problem of $\epsilon$-testing the \rhoIndepSet property is non-trivial only when $\epsilon < \rho^2$, as otherwise there are no graphs that are $\epsilon$-far from having a $\rho$-independent set.
When $\epsilon = \Theta(\rho^2)$, the query complexity of both the canonical tester and the non-adaptive tester in the proof of \cref{thm:cliques-queries} is $O(1/\epsilon)$, which is optimal.
And in all other non-trivial regimes where $\epsilon = o(\rho^2)$, the bound in \cref{thm:cliques-queries} is strictly better than the query complexity bound for the canonical tester. 

It has been known for some time that the gap between the query complexity of the canonical tester and the optimal tester for any graph property is at most quadratic~\cite{GoldreichTrevisan03}, and this holds even for adaptive testers, which are testers that can determine which edge to query next based on what the previous edge queries have revealed about the graph $G$.
But in general there are few graph properties for which we know whether the canonical tester has optimal query complexity or not.
\Cref{thm:cliques-queries} show that the canonical tester for the \rhoIndepSet property does not have optimal query complexity.
Further, since our tester is non-adaptive, \Cref{thm:cliques-queries} also shows that the \rhoIndepSet property is one of the few graph properties for which we know that the canonical tester does not have the same query complexity the best non-adaptive tester.
Previously, a similar separation was known for the property of being an empty graph, a collection of isolated cliques~\cite{GoldreichRon11}, or a blow-up of a fixed graph~\cite{AvigadGoldreich11}, all of which fit into the class of homogeneous graph partition properties.

Note that the best lower bounds for the query complexity of the \rhoIndepSet property are the incomparable lower bounds of $\Omega(1/\epsilon)$ queries~\cite{GoldreichRon11} and the lower bound of $\Omega(\rho^3/\epsilon^2)$ that follows from the lower bound on the sample complexity of the canonical tester~\cite{feigeCliqueTesting2004}.

\subsection{Techniques: New Graph and Hypergraph Container Lemmas}

The analysis of the testers that establish \Cref{thm:sat,thm:cliques-queries} both require new container lemmas. 
Before we introduce these lemmas, let us briefly describe the graph container lemma introduced in~\cite{blaisSeth} to establish the sample complexity of the canonical tester for the \rhoIndepSet property.

\begin{lemma}[Lemma 5 in~\cite{blaisSeth}]
\label{lem:BSindepset}
Let $G = (V,E)$ be a graph on $n$ vertices that is $\epsilon$-far from the \rhoIndepSet property.
Then there is a collection of fingerprints $\mathcal{F} \subseteq P(V)$ and a function $C: \mathcal{F} \rightarrow P(V),$\footnote{$P(V)$ denotes the power set of $V.$} which we call the container function, such that for every independent set $I$ in $G$, there exists a $F \in \mathcal{F}$ that satisfies 
$F \subseteq I \subseteq C(F)$,
$|F| \le \widetilde{O}(\frac{\rho^2}{\epsilon})$, and
\[
|C(F)| \le \left( \rho - |F| \cdot \widetilde{\Omega}\left( \tfrac{\epsilon}{\rho} \right) \right) n.
\]
\end{lemma}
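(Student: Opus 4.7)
The plan is to adapt the Kleitman--Winston--Sapozhenko greedy graph container construction to the property-testing setting. Given a graph $G = (V,E)$ that is $\epsilon$-far from \rhoIndepSet and an independent set $I \subseteq V$, initialize a working set $A := V$ and an empty fingerprint $F$, then iteratively process the maximum-degree vertex $v$ of the induced subgraph $G[A]$ (under a canonical, $G$-determined tie-breaking rule). When $v \in I$, append $v$ to $F$ and prune $v$'s remaining $A$-neighborhood from $A$; otherwise discard $v$ from $A$. Halt once the maximum degree in $G[A]$ falls below an appropriately chosen threshold $\tau = \widetilde{\Omega}(\epsilon n/\rho)$, and define $C(F) := A \cup F$. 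Because both the tie-breaking rule and the stopping test depend only on $G$ and the evolving $A$, the container function $C$ can replay the entire run knowing only $F$: at each step it checks whether the candidate vertex lies in $F$, thereby faithfully re-deriving the same trajectory. The inclusions $F \subseteq I \subseteq C(F)$ follow by construction, since no vertex of $I$ is ever discarded (discarded vertices are either neighbors of fingerprint vertices, and $I$ is independent, or were processed themselves as non-fingerprint vertices, and therefore are not in $I$).

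The crucial structural input comes from the $\epsilon$-farness hypothesis, which guarantees that the induced subgraph on any sufficiently large $A$ is edge-dense. Concretely, since the most economical way to place a $\rho n$-independent set is to delete all edges inside some $\rho n$-subset, every such subset $S \subseteq V$ must have $|E(G[S])| \ge \epsilon n^2$. Averaging this bound over the $\rho n$-subsets of any set $A$ with $|A| \ge \rho n$ yields
\[
|E(G[A])| \;\ge\; \widetilde{\Omega}\!\left(\frac{\epsilon \, |A|^{2}}{\rho^{2}}\right),
\]
so $G[A]$ contains a vertex of degree at least $\widetilde{\Omega}(\epsilon |A|/\rho^{2}) \ge \widetilde{\Omega}(\epsilon n/\rho)$. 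Calibrating $\tau$ just below this universal lower bound ensures that the algorithm cannot terminate while $|A| \ge \rho n$.

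Given these tools, both quantitative bounds follow by accounting. The bound on $|C(F)|$ comes from the observation that each fingerprint step removes at least $\widetilde{\Omega}(\epsilon n/\rho)$ vertices of $A$ beyond the fingerprint vertex itself, so combining the cumulative per-fingerprint shrinkage with the termination condition $|A| < \rho n$ delivers $|C(F)| \le (\rho - |F|\cdot\widetilde{\Omega}(\epsilon/\rho)) n$. The fingerprint-size bound $|F| \le \widetilde{O}(\rho^{2}/\epsilon)$ uses the sharper $|A|$-dependent form of the max-degree estimate: each fingerprint step multiplies $|A|$ by a factor of at most $1 - \widetilde{\Omega}(\epsilon/\rho^{2})$, so $|A|$ decays geometrically in the number of fingerprint additions and can sustain only $\widetilde{O}(\log(1/\rho) \cdot \rho^{2}/\epsilon) = \widetilde{O}(\rho^{2}/\epsilon)$ such steps before plummeting below $\rho n$.

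The main subtlety is choosing the threshold $\tau$ and the precise stopping rule so that all three requirements hold in concert: it must depend only on $G$ and $A$ (so that $C$ can reproduce it from $F$); it must force the algorithm to run long enough that the linear-in-$|F|$ shrinkage in $|C(F)|$ is tight; and it must permit the geometric-decay analysis to deliver the $\widetilde{O}(\rho^{2}/\epsilon)$ bound on $|F|$. Handling the interplay between fingerprint and non-fingerprint steps---especially the fact that the latter reduce $|A|$ by only $1$ and do not contribute to the per-fingerprint shrinkage---while pinning down the right polylogarithmic factors from the $|A|$-dependent averaging bound is the step I expect to be the most technically demanding.
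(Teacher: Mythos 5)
There is a genuine gap in your derivation of the container-size bound. Your accounting argues: (i) each fingerprint step deletes at least $\tau = \widetilde{\Omega}(\epsilon n/\rho)$ vertices from $A$, and (ii) the stopping rule forces $|A| < \rho n$ at termination; you then conclude $|C(F)| \le (\rho - |F|\cdot\widetilde{\Omega}(\epsilon/\rho))n$. This is a non sequitur: the $|F|\cdot\tau$ deletions are measured from the \emph{initial} size $n$, while the termination condition only places the final $|A|$ somewhere below $\rho n$; together these give $|A_{\mathrm{final}}| \le \min\{n - |F|\tau,\ \rho n\}$, not $|A_{\mathrm{final}}| \le \rho n - |F|\tau$. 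The lemma's trade-off is much stronger than what a fixed degree threshold can certify: when $|F| = \widetilde{\Theta}(\rho^2/\epsilon)$ the conclusion demands $|C(F)| \le c\rho n$ for a constant $c<1$, i.e.\ a container a constant factor below $\rho n$. But with your stopping rule the process can spend essentially all of its $\widetilde{\Theta}(\rho^2/\epsilon)$ fingerprint steps while $|A| \ge \rho n$ (that is exactly the geometric-decay regime you invoke for the $|F|$ bound) and then halt as soon as $|A|$ dips to roughly $(\rho - \Theta(\epsilon))n$, since $\epsilon$-farness only forces maximum degree at least $\approx \epsilon n/\rho$ while $|A| \ge (\rho - \epsilon/2)n$ and gives no guarantee below that. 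In that situation $|C(F)| \approx (\rho-\Theta(\epsilon))n$ while the required bound is $\approx (\rho - \widetilde{\Omega}(\rho))n$, which fails badly whenever $\epsilon \ll \rho$. So the "interplay" you flag at the end as the technically demanding step is in fact the heart of the lemma, and a single $G$-determined threshold cannot resolve it.

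For comparison, the actual argument in Blais--Seth (mirrored by the analyses in this paper, cf.\ the reformulated container lemmas and the shrinking lemma for outer containers) does not stop at a fixed degree threshold. It keeps the entire sequence of fingerprint/container pairs $(F_t, C_t)$, $t = 1, 2, \dots$, proves a shrinking lemma stating that while $C_t$ is still large and not yet sparse, $C_t$ loses a $\widetilde{\Omega}(\epsilon/\rho\alpha)$ (or $\widetilde{\Omega}(\epsilon/\rho^2)$) fraction of its vertices \emph{outside a reference sparse set} in each iteration, and then concludes that for every independent set $I$ there \emph{exists} an index $t \le \widetilde{O}(\rho^2/\epsilon)$ at which $|C_t| \le (\rho - t\cdot\widetilde{\Omega}(\epsilon/\rho))n$; the fingerprint placed in the collection $\mathcal{F}$ is the one at that $I$-dependent index. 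Measuring shrinkage relative to a sparse reference subset of size close to $\rho n$ (rather than relative to $n$, as in your accounting) is exactly what converts $\epsilon$-farness into a deficit below $\rho n$ that grows linearly in the fingerprint size. Your reconstruction-from-$F$ argument, the inclusions $F \subseteq I \subseteq C(F)$, and the $|F| \le \widetilde{O}(\rho^2/\epsilon)$ bound are fine, but the stated bound on $|C(F)|$ needs this different mechanism.
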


Informally, \cref{lem:BSindepset} says that there is a small collection of containers, that being $\mathcal{C}=\{C(F) : F \in \mathcal{F}\},$ such that every independent set is a subset of at least one of the containers in the collection.
This follows from the fact that there can be at most $n^{\widetilde{O}(\rho^2/\epsilon)}$ fingerprints of cardinality $\widetilde{O}(\rho^2/\epsilon)$.
Furthermore, the lemma also says that each container in the collection is small; they all have cardinality at most $(\rho - \widetilde{\Omega}(\epsilon/\rho))n$, a bound 
that is noticeably smaller than the cardinality $\rho n$ of the independent sets that are present in all graphs that have the property \rhoIndepSet.

But in fact \cref{lem:BSindepset} says even more: it gives a tight trade-off between the size of a container and that of its fingerprint.
The larger the container, the stronger the guarantee on the size of its fingerprint.
(Or, equivalently, the larger the fingerprint, the smaller is its associated container.)
For instance, it guarantees that there can be at most $n^{\widetilde{O}(1)}$ containers of cardinality $(\rho - \widetilde{\Theta}(\epsilon/\rho))n$ in the collection, and that if the collection contains $n^{\widetilde{\Theta}(\rho^2/\epsilon)}$ containers, most of them have cardinality $c \cdot \rho n$ for some constant $c < 1$.
This trade-off is critical in obtaining the tight analysis of the canonical tester for the \rhoIndepSet property.

Once we have \cref{lem:BSindepset} in hand, the analysis of the canonical tester for \rhoIndepSet is straightforward. 
Namely, with a simple union bound argument, we can show that when $S \subseteq V$ is a set of $|S| = s$ vertices drawn uniformly at random and $G$ is $\epsilon$-far from the \rhoIndepSet property, there is only a small probability that there is a fingerprint $F \in \mathcal{F}$ and associated container $C(F)$ where $F \subseteq S$ and $S$ contains at least $\rho s$ vertices from $C(F)$. 
This implies a correspondingly low bound on the probability that $S$ contains an independent set of size $\rho s$ and incorrectly accepts.

\paragraph{A hypergraph container lemma for satisfiability.}
To prove \cref{thm:sat}, we again want to prove that the natural canonical tester has high success probability even when it has small sample complexity.
In the case of satisfiability, however, what is sampled by the canonical tester is a set of variables of the input CSP, and it accepts if and only if the CSP restricted to the sampled variables is satisfiable.

The problem of testing satisfiability can be expressed as a problem on hypergraphs.
For any CSP $\phi$, we construct its associated hypergraph $H_\phi$ in the following standard way.
The hypergraph $H_\phi$ has $nk$ vertices.
We associate each vertex with a possible assignment to a variable of $\phi$. 
(So each variable of $\phi$ is associated with $k$ vertices in $H_\phi$.)
We then add a $q$-edge to the hypergraph $H_\phi$ for each variable assignment that falsifies a constraint in $\phi$.
See \Cref{fig:contruct-hypergraph-example} for an example of an edge coming from the constraint that evaluates to false only when the variables $x_1,x_2,x_3$ are all equal.
With this construction, a satisfying assignment to $\phi$ corresponds to an independent set on $n$ vertices in $H_\phi$ where each vertex in this independent set corresponds to a unique variable.
And when $\phi$ is $\epsilon$-far from satisfiable, every induced subhypergraph of $H_{\phi}$ that corresponds to an assignment to all $n$ variables contains at least $\epsilon \binom{n}{q}$ edges.

\begin{figure}[t]
    \centering
    \includegraphics[width=0.5\textwidth]{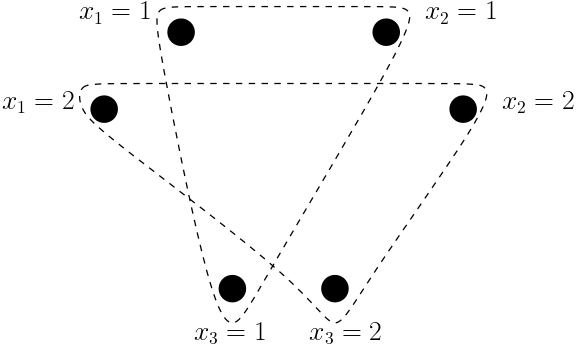}
    \caption{An illustration of the two hyperedges formed from the constraint that the variables $x_1$, $x_2$, and $x_3$ must not all be equal.}
    \label{fig:contruct-hypergraph-example}
\end{figure}

With this construction, the problem of analyzing the natural canonical tester for satisfiability is reduced to the problem of bounding the probability that a random set of $s$ variables corresponds to an independent set in $H_\phi$ with one vertex associated with each of these variables.
We introduce a hypergraph container lemma that lets us bound the latter probability.

\begin{lemma}
\label{lem:sat}
Fix any instance $\phi$ of \qkSAT on $n$ variables that is $\epsilon$-far from satisfiable.
Let $H_\phi$ be the hypergraph corresponding to $\phi$ obtained by the construction above.
Then there is a collection of fingerprints $\mathcal{F} \subseteq P(V[H_\phi])$ and a function $C: \mathcal{F} \rightarrow P(V[H_\phi]),$ which we call the container function, such that for every independent set $I$ in $H_\phi$ whose vertices all correspond to different variables in $\phi$, there is a $F \in \mathcal{F}$ that satisfies $F \subseteq I, I \setminus F \subseteq C(F)$, $|F| \le 8 \frac{kq}{\epsilon}$, and the number ${\rm vars}(C(F))$ of distinct variables corresponding to vertices in $C(F)$ is bounded by
\[
{\rm vars}(C(F)) \le \left( 1 - |F| \cdot \frac{\epsilon}{4 kq^2 \ln \frac{kq}{\epsilon}} \right) n.
\]
\end{lemma}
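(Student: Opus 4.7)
The plan is to construct $F$ and $C(F)$ algorithmically from $I$, via an iterative procedure maintaining a pair $(F_t, C_t)$ starting from $F_0 = \emptyset$, $C_0 = V[H_\phi]$ and satisfying the invariants $F_t \subseteq I$ and $I \setminus F_t \subseteq C_t$. Fixing an arbitrary ordering on $V[H_\phi]$ in advance, each step picks a candidate vertex $v \in C_t$ deterministically using only the ordering and the current state $(F_t, C_t)$; whether $v$ is then added to $F$ is dictated by whether $v \in I$, but this single bit of information is captured by $F$ itself, so the entire sequence of states can be reconstructed from $F$ alone. This yields a well-defined container function $C: \mathcal{F} \to P(V[H_\phi])$.

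The driver for progress at each step is the $\epsilon$-farness of $\phi$. Given $(F_t, C_t)$, extend the partial assignment $\sigma_{F_t}$ induced by $F_t$ to a complete assignment $\pi$ by drawing $\pi_i$ uniformly from $V_i \cap C_t$ for each variable $i \notin \text{var}(F_t)$ (and arbitrarily when $V_i \cap C_t = \emptyset$). Since every complete assignment falsifies at least $\epsilon \binom{n}{q}$ constraints, taking expectations gives
\[
\sum_e w(e) \;\ge\; \epsilon \binom{n}{q},
\]
where the sum runs over hyperedges $e \subseteq F_t \cup C_t$ that agree with $\sigma_{F_t}$ on $\text{var}(F_t)$, and $w(e) = \prod_{u \in e \setminus F_t} 1/|V_{\text{var}(u)} \cap C_t|$ is the probability $\pi$ matches $e$. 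An averaging argument then produces a vertex $v^\star \in C_t$ whose weighted degree $d(v^\star) = \sum_{e \ni v^\star} w(e)$ is large relative to the remaining variable count. If $v^\star \in I$ we add it to $F_t$; otherwise we simply delete it from $C_t$. When $v^\star$ is added, we further remove a set $R \subseteq C_t$ of vertices on other variables, designed to drive $\text{vars}(C_t)$ down by at least $\delta n$ with $\delta = \frac{\epsilon}{4 kq^2 \ln(kq/\epsilon)}$.

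The construction of $R$ is the combinatorial heart of the proof and the main obstacle. Because $v^\star \in I$ and $I$ is independent, every hyperedge through $v^\star$ must omit some vertex outside $I$; aggregating this over the many hyperedges witnessed by $d(v^\star)$ allows us to eliminate many vertices, provided they spread across many distinct variables rather than pile up on a few. To match the quantitative bound, I would partition the analysis into $O(\log(kq/\epsilon))$ geometric scales of weighted degree — which is the origin of the logarithmic factor — and at each scale either wholesale-eliminate variables whose vertices in $C_t$ are all ``saturated'' by hyperedges through $v^\star$, or argue that only few unsaturated variables remain and pass to a finer scale. The fingerprint bound $|F| \le 8 kq/\epsilon$ is then obtained by tracking the hyperedge weight consumed at each step against the initial budget $\epsilon \binom{n}{q}$: each fingerprint vertex consumes weight proportional to its weighted degree, and the process terminates either when this budget is exhausted or when $\text{vars}(C_t)$ has already dropped to the promised value.
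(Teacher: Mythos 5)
Your high-level frame (a deterministic greedy procedure whose state is reconstructible from the fingerprint, driven by $\epsilon$-farness via a high-degree vertex) is in the right spirit, but the step you yourself flag as ``the combinatorial heart'' is exactly the step that is missing, and the sketch you give for it would not be sound. The container invariant you must preserve is $I \setminus F_t \subseteq C_t$: you may only delete a vertex $w$ from $C_t$ if the information available to the reconstruction (namely $F_t$ and the deterministic rules) \emph{certifies} that $w \notin I$. Knowing that $v^\star \in I$ and that $I$ is independent only tells you that each hyperedge through $v^\star$ contains \emph{some} vertex outside $I$; for $q \ge 3$ it does not tell you which one, so no amount of aggregating the weighted degree of $v^\star$ lets you remove specific vertices, and in particular a variable being ``saturated'' by hyperedges through $v^\star$ is perfectly compatible with one of its vertices lying in $I$. (This is precisely the failure of the $q=2$ neighbour-elimination argument that the paper points out.) The legitimate certificates are of two kinds: a vertex whose degree (in the relevant auxiliary structure) strictly exceeds that of the maximum-degree vertex of $I$ cannot be in $I$, and a vertex that together with already-chosen fingerprint vertices completes a full hyperedge cannot be in $I$. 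The paper's procedure is built around exactly these: each iteration selects $q-1$ fingerprint vertices by descending through iterated link hypergraphs $\cD_{t,q-1},\dots,\cD_{t,1}$, removes higher-degree vertices at each level, and removes the vertices carrying $1$-edges at the bottom. Your proposal contains neither mechanism, so as written the set $R$ cannot be constructed without potentially deleting vertices of $I$.

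The quantitative side is also asserted rather than derived, and it differs from how the bound actually arises. You posit that each accepted step can be made to decrease ${\rm vars}(C_t)$ by $\delta n$ with $\delta = \frac{\epsilon}{4kq^2\ln(kq/\epsilon)}$ and that a weight-budget argument caps $|F|$ at $8kq/\epsilon$, attributing the logarithm to a geometric-scale decomposition; none of this is established. In the paper the lemma is proved in the contrapositive: one first shows a degree-decay statement (after $t$ iterations every $(\le n)$-vertex induced subhypergraph of $H_\phi[C_t]$ has maximum degree $O\bigl(\tfrac{kq}{t}\binom{n-1}{q-1}\bigr)$), and then, assuming every container up to $t = 8kq/\epsilon$ retains too many variables, assembles a full assignment from the final container plus one replacement vertex per dropped variable, charged at the level at which the variable was dropped; the falsified-constraint count is a harmonic sum over $t$, which is where the $\ln(kq/\epsilon)$ enters, and it comes out below $\epsilon\binom{n}{q}$, contradicting farness. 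So both the removal mechanism and the accounting that yields the stated bounds need to be supplied before your outline becomes a proof.
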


As was the case with \cref{lem:BSindepset}, there is a trade-off between the size of fingerprints and their corresponding containers which is critical for obtaining the strong bounds on the sample complexity for the canonical satisfiability tester.
However, one key difference with \Cref{lem:sat} is that the desired measure of the container is not the size $|C(F)|$ but rather the number of distinct variables corresponding to vertices in $C(F),$ which we denote by ${\rm vars}(C(F)).$

There are two challenges in establishing \cref{lem:sat}.
The first is that it applies to hypergraphs, not graphs.
This is a significant departure from the work in~\cite{blaisSeth}, as \cref{lem:BSindepset} was established with an iterative procedure that was specifically tailored to graphs.
Specifically, it relied heavily on the fact that for any vertex $v$ in a fingerprint, all the neighbors of $v$ can be eliminated from the corresponding container since an independent set that includes $v$ does not include any of its neighbours.
This simple elimination argument no longer holds even for 3-uniform hypergraphs, so a hypergraph container lemma needs to be obtained from a different process.

Second, in order to obtain the strongest bounds possible on the analysis of the canonical tester, we must also use the fact that the hypergraphs constructed from CSPs are not arbitrary, but rather have a specific structure.
In particular, if $\phi$ is $\epsilon$-far from satisfiable, then all $n$-subgraphs in $H_\phi$ that correspond to assignments to $n$ different variables have at least $\epsilon \binom{n}{q}$ edges. 
This specific characteristic of $H_\phi,$ which itself is very different than any characteristics utilized by hypergraph container methods in the past, can be exploited when constructing the container/fingerprint pairs.

\paragraph{A graph container lemma for independent set stars.}
To prove \cref{thm:cliques-queries}, we must first design a non-canonical algorithm for testing the \rhoIndepSet property.

The tester that we introduce is a natural variant of the canonical tester.
As with the canonical tester, it begins by drawing a set $S$ of vertices of the graph uniformly at random.
In the next step, it then draws a smaller subset $R \subseteq S$ that we will call the \emph{core} of the sample.
The algorithm then queries all pairs of vertices in $R \times S$. 
In other words, instead of querying all pairs of vertices in $S$ (as the canonical tester does), this algorithm queries only the pairs of vertices in $S$ that contain at least one vertex in the core $R$.
The algorithm then accepts if and only if there is an independent set $I \subseteq R$ of size $|I| = \rho |R|$ and a set $J \subseteq S$ of size $|J| = \rho |S|$ such that the vertices in $I$ are not connected to any of the vertices in $J$.

We can describe the tester in an equivalent alternative way with a little bit more notation.
A \emph{star} in a graph is a vertex $v$ along with a set $J \subseteq V \setminus \{v\}$ such that $v$ is connected to all the vertices in $J$.
Similarly, a \emph{star complement} is a vertex $v$ and a set $J \subseteq V \setminus \{v\}$ such that $v$ is not connected to any vertex in $J$.
An \emph{independent set star} is a natural generalization of this notion: it is a set $I \subseteq V$ that we call the \emph{core} along with a set $J \subseteq V \setminus I$ such that $I$ forms an independent set and none of the vertices in $J$ are connected to any vertex in $I$.
(See \cref{fig:indep-set-star} for an illustration of independent set stars.)
Then the tester we just defined accepts if and only if it finds an independent set star with a core of size $\rho|R|$ in $R$ and total size $\rho |S|$ in $S$.

\begin{figure}[t]
    \centering
    \includegraphics[width=0.4\textwidth]{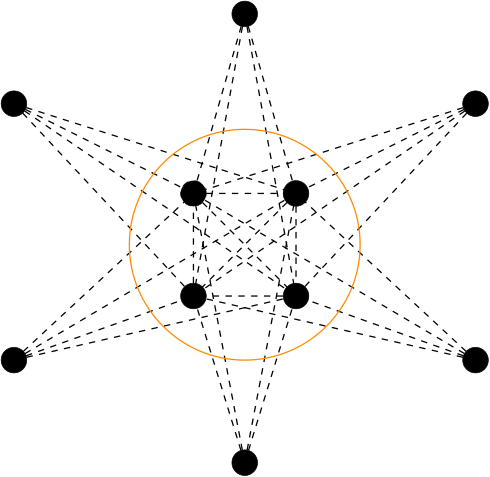}
    \caption{An independent set star with a core of size $4$ (circled) and $6$ outer vertices. Dashed lines represent pairs of vertices that are not adjacent in the graph. The outer vertices may or may not be adjacent to each other.}
    \label{fig:indep-set-star}
\end{figure}

The challenge in the analysis of the new tester is in showing that it accepts any graph that is $\epsilon$-far from \rhoIndepSet with only small probability.
\Cref{lem:BSindepset} is not of use here because it can only be used to bound the probability that a sample of vertices includes an independent set.
What we want now is a different tool to enable us to bound the probability that a different type of structure---namely, an independent set star---is present in the sampled subgraph.
This is what we obtain with the following graph container lemma for independent set stars.

\begin{lemma}
\label{lem:ISstar}
Let $G = (V,E)$ be a graph on $n$ vertices that is $\epsilon$-far from the \rhoIndepSet property.
Then there is a collection of fingerprints $\mathcal{F} \subseteq P(V)$ and two functions $C,D:\mathcal{F} \rightarrow P(V)$ such that for every independent set star $(I,J)$ with core $I$ in $G$, there exists a $F \in \mathcal{C}$ that satisfies $F \subseteq I \subseteq C(F) \subseteq D(F)$, $I\cup J \subseteq D(F)$, $|F| \le \frac{\rho^2}{\epsilon} \cdot 8\ln(\frac{2\rho}{\epsilon})$, 
\[
	|C(F)| \le \left( \rho - |F| \cdot \frac{\epsilon}{8\rho \ln \frac{2\rho}{\epsilon}} \right) n
\]
and if $|F| < 4\rho\ln(\frac{2\rho}{\epsilon})/\sqrt{\epsilon}$ then
\[
	|D(F)| \le \left( \rho - |F| \cdot \frac{\epsilon}{8\rho \ln \frac{2\rho}{\epsilon}} \right) n
\]
as well.
\end{lemma}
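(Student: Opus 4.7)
The plan is to extend the iterative container-building procedure underlying \Cref{lem:BSindepset} so that it maintains two containers $C$ and $D$ in parallel, corresponding to the core $I$ and the full union $I \cup J$ of the independent set star. The crucial additional observation driving the argument is that whenever $v$ lies in $I$, the star condition yields $N(v) \cap (I \cup J) = \emptyset$, not just $N(v) \cap I = \emptyset$. Hence every vertex added to a fingerprint $F \subseteq I$ can be used to simultaneously prune both containers, and the fingerprint itself need only certify membership in the core.

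I would initialize $C_0$ and $D_0$ to be a common default container of size at most $\rho n$ computed from $G$ alone (analogously to the starting point in the proof of \Cref{lem:BSindepset}) and set $F_0 = \emptyset$. At each iteration $i$, I apply a Kleitman--Winston style scan to $C_{i-1}$: sort its vertices in decreasing order of degree within $G[C_{i-1}]$, break ties canonically, and locate the first vertex $v_i$ that belongs to $I$. Every vertex scanned before $v_i$ is certified as not being in $I$ and is removed from $C$, but such a vertex may still belong to $J$ and so is retained in $D$. The vertex $v_i$ is then added to $F$, and both $C$ and $D$ are updated by removing $\{v_i\} \cup N(v_i)$; validity of these updates rests on the two invariants $N(v_i) \cap I = \emptyset$ and $N(v_i) \cap J = \emptyset$. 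The resulting mapping $F \mapsto (C(F), D(F))$ is canonical in the sense that replaying the deterministic scan using only $G$ and $F$ recovers both containers.

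To bound the per-step reduction of each container, I would invoke the $\epsilon$-far hypothesis in its strong form: every subset of $V$ of size $\rho n$ induces at least $\epsilon \binom{n}{2}$ edges, so whenever $|C_{i-1}|$ is still of order $\rho n$ the induced subgraph $G[C_{i-1}]$ contains some vertex of degree $\widetilde{\Omega}(\epsilon n / \rho)$. A standard Kleitman--Winston accounting with a logarithmic bucketing of degrees then shows that the selected $v_i$ itself has degree $\Omega(\epsilon n / (\rho \ln(\rho/\epsilon)))$ in $G[C_{i-1}]$, which yields the claimed bound on $|C(F)|$ once summed over $|F|$ iterations. Since $C_{i-1} \subseteq D_{i-1}$ is maintained throughout the procedure, we have $|N(v_i) \cap D_{i-1}| \ge |N(v_i) \cap C_{i-1}|$, so in every iteration $|D|$ shrinks by at least as much as $|C|$ does from the neighborhood removal, and the same per-step bound accumulates to the claimed bound on $|D(F)|$.

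The asymmetry between the two containers enters precisely through the skipped vertices: they shrink $|C|$ without shrinking $|D|$, so $|C|$ falls below $\rho n$ before $|D|$ does, after which the edge-density argument driving the Kleitman--Winston step no longer applies. This is why the $|D|$ bound is only claimed while $|F| < 4\rho \ln(2\rho/\epsilon)/\sqrt{\epsilon}$: plugging this into the claimed bound gives $|D(F)| \le (\rho - \sqrt{\epsilon}/2) n$, which is exactly the scale at which the number of accumulated skipped vertices catches up with the number of neighborhood removals, so that beyond this threshold the procedure can no longer guarantee that $|N(v_i) \cap D_{i-1}|$ is $\widetilde{\Omega}(\epsilon n / \rho)$. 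The main obstacle I anticipate is the careful bookkeeping that simultaneously keeps the mapping $F \mapsto (C(F), D(F))$ canonical and respects the asymmetric update rules, together with recovering the precise logarithmic factors $\ln(2\rho/\epsilon)$ in the statement; both should fall out of the standard trade-off between bucket resolution and the cumulative skipped-vertex count, but verifying the exact constants (in particular the factor of $8$ and the pre-factor of $\frac{\rho^2}{\epsilon}$ bounding $|F|$) will require a careful phase-by-phase argument rather than a single uniform estimate.
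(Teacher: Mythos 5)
Your proposal for the inner container and the canonicity of $F \mapsto (C(F),D(F))$ is in the spirit of the paper, but the step that carries all the difficulty -- showing that the \emph{outer} container shrinks -- has a genuine gap. You argue that since $C_{i-1} \subseteq D_{i-1}$, the neighborhood removal shrinks $D$ at least as much as it shrinks $C$, ``and the same per-step bound accumulates to the claimed bound on $|D(F)|$.'' This conflates the two sources of shrinkage in the Kleitman--Winston accounting: the bound on $|C(F)|$ is driven by neighborhood removals \emph{and} by discarding the skipped higher-degree vertices, and only the first of these transfers to $D$. There is no per-step guarantee that the selected core vertex has degree $\Omega(\epsilon n/(\rho\ln(\rho/\epsilon)))$ in $G[C_{i-1}]$ as you assert; for instance, $G$ can be $\epsilon$-far from \rhoIndepSet while containing an independent set $I$ of size $\Theta(\rho n)$ consisting entirely of isolated (or very low-degree) vertices, in which case every vertex your scan selects has essentially no neighbors, all of the shrinkage of $C$ comes from skipped vertices, and $D$ never shrinks at all. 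Your heuristic for the threshold $|F| < 4\rho\ln(2\rho/\epsilon)/\sqrt{\epsilon}$ (``skipped vertices catch up with neighborhood removals'') is therefore not a proof, and it is not the role this threshold plays in the actual argument.

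The paper closes this gap with machinery your sketch does not have. Its procedure (\Cref{alg:FingerprintContainerQueries}) adds \emph{two} fingerprint vertices per round: $u_t$, of maximum degree in $G[C_{t-1}]$, and a second vertex $v_t \in I$ of maximum degree in the \emph{outer} container $G[D_{t-1}]$; moreover, the inner container is pruned not only by $C$-degrees relative to $u_t$ but also by $D$-degrees relative to $v_t$. This coupling is what makes the key step work: the Outer Container Shrinking Lemma (\Cref{lem:containers-shrinking2}) uses $\epsilon$-farness on the union of a sparse set $D$ of size $(\rho-\alpha)n$ (essentially the small inner container or the eventual small outer container) with an arbitrary $\alpha n$-subset of $D_t \setminus D$ to produce a vertex of $C_{t+1}$ with $\Omega(\epsilon|D_t\setminus D|/(\alpha\rho))$ neighbors in $D_t\setminus D$; since that vertex survived the $D$-degree pruning, the chosen $v_{t+1}$ has at least that degree, so $|D_t\setminus D|$ shrinks geometrically. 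The final lemma (\Cref{lem:GCL-indepset-queries}) is then a case analysis over three indices ($t_c$ where the inner container is small and sparse, $t_d$ where the edges between $C_{t_d}$ and $D_{t_d}$ drop below $\epsilon n^2/4$, and $t^*$ the last index with $|D_{t^*}| \ge \rho n$), with the $\sqrt{\epsilon}$ threshold marking the regime in which the inner container still has size at least $(\rho-\sqrt{\epsilon}/2)n$ and can serve as the target set in the shrinking lemma. To repair your proposal you would need to add a selection rule based on outer-container degrees (or some equivalent mechanism forcing fingerprint vertices to have many neighbors in $D$) and an argument of the above type; the ``$D$ shrinks at least as fast as $C$'' comparison alone cannot yield the stated bound on $|D(F)|$.
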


Informally, \cref{lem:ISstar} says that there is a small collection $\mathcal{C}=\{(C(F),D(F)) : F \in \mathcal{F}\}$ of container pairs (which we can think of as the inner and the outer containers) such that every independent set star $(I,J)$ in the graph $G$ is contained by some pair $(C,D) \in \mathcal{C}$.
Here, containment implies that the core $I$ is contained in the inner container $C$, and the entire independent set star $I \cup J$ is in the outer container $D$.
We also have the guarantee that the inner container is always small (as before with a trade-off between the size of a container and of its fingerprint) 
and the additional guarantee that for all the container pairs with small fingerprints, the outer container is also small.

As far as we know, \cref{lem:ISstar} is the first graph container lemma that applies directly to graph structures that are not independent sets. 
That is, in the prior applications of the graph and hypergraph container method that we are aware of, even when the container method has been used to enumerate other types of structures or combinatorial objects (such as square-free graphs, sum-free sets etc.), it was done by reformulating the problem as a problem of counting independent sets in an appropriately defined graph or hypergraph.

\section{Preliminaries}
\label{sect:prelim}

\paragraph{Testing graph properties.}
For any property $\Pi$ of graphs and any parameter $\epsilon > 0$, we say that the graph $G = (V,E)$ on $|V| = n$ vertices is \emph{$\epsilon$-far} from having property $\Pi$ if at least $\epsilon n^2$ edges need to be added or removed from the graph to turn it into another graph $G'$ that does have property $\Pi$.

A randomized algorithm $T$ is an \emph{$\epsilon$-tester} for the property $\Pi$ of graphs if there are absolute constants $c_0,c_1$ satisfying $0 \le c_0 < c_1 \le 1$ such that $T$ accepts all graphs that have property $\Pi$ with probability at least $c_1$ and accepts all graphs that are $\epsilon$-far from $\Pi$ with probability at most $c_0$.
(The definition of testers often fixes $c_0 = \frac13$ and $c_1 = \frac23$ but standard success amplification methods show that the two definitions result in complexity measures that are asymptotically equivalent so we use the more general definition for convenience.)

We are interested in property testers that examine only a sublinear fraction of the graph during their execution.
The most basic such tester is the \emph{canonical tester} that samples a set $S \subseteq V$ of vertices uniformly at random and accepts if and only if the induced subgraph $G[S]$ on the set $S$ of vertices satisfies some fixed predicate.
For a hereditary graph property $\Pi$, the \emph{natural canonical tester} is the one that accepts if and only if the induced subgraph has the property $\Pi$.
The \emph{sample complexity} of the canonical tester is the size $|S|$ of the set of sampled vertices.

A more general class of graph property testers is the class of algorithms that can query arbitrary pairs of vertices $v,w \in V$ to determine if $(v,w) \in E$ or not.
(This definition along with the choice of distance measure corresponds to the \emph{dense graph} model of property testing.)
A tester that fixes all of the edge queries it wishes to make to the unknown graph $G$ before obtaining the answer to any of these queries is called \emph{non-adaptive}.
An \emph{adaptive} tester can, by contrast, determine which edge to query next based on what the previous edge queries have revealed about the graph $G$.
The \emph{query complexity} of a non-adaptive or adaptive graph property tester is the maximum number of edge queries it may make during its execution.

Property testers give two natural measures of complexity for graph properties:
\begin{itemize}
	\item[$\cS_\Pi$:] The \emph{sample complexity} of $\Pi$, denoted $\cS_\Pi(n,\epsilon)$, is the minimum sample complexity of a canonical $\epsilon$-tester for $\Pi$ on graphs over $n$ vertices.
	\item[$\cQ_\Pi$:] The \emph{query complexity} of $\Pi$, denoted $\cQ_\Pi(n,\epsilon)$, is the minimum query complexity of an $\epsilon$-tester for $\Pi$ on graphs over $n$ vertices.
\end{itemize}

For any graph property $\Pi,$ a canonical tester with sample complexity $s$ makes $\binom{s}{2}$ queries, and so $\cQ_\Pi(n,\epsilon) \leq \cS_\Pi(n,\epsilon)^2.$
Further, there is at most a quadratic gap between the query complexity of the canonical tester and the query complexity of the optimal tester~\cite{GoldreichTrevisan03}, which can be rephrased as $\cQ_\Pi(n,\epsilon) = \Omega(\cS_\Pi(n,\epsilon)).$

\paragraph{Testing hypergraph properties.}
For any $q \ge 2$, a \emph{$q$-uniform hypergraph} $H = (V,E)$ is the natural generalization of a graph where elements of $E$ are sets of $q$ distinct elements from $V$. 
(When $q = 2$, this definition corresponds exactly to that of a graph.)
All the definitions from the last section carry over to properties of hypergraphs, with one important difference: for any fixed property $\Pi$ of $q$-uniform hypergraphs, the hypergraph $H$ is \emph{$\epsilon$-far} from $\Pi$ if we have to add or remove $\epsilon \binom{n}{q}$ edges from the hypergraph to obtain a hypergraph $H'$ that has property $\Pi$.
This modification ensures that we have that $H$ is $\epsilon$-far from a property $\Pi$ if and only if an $\epsilon$ fraction of all possible edges in a $q$-uniform hypergraph must be modified to get a hypergraph with the property.

For any subset of the vertices $U \subseteq V,$ we write $H[U]$ to denote the $q$-uniform hypergraph with vertex set $U$ and with an edge set that contains every $q$-edge of $H$ whose endpoints are all contained in $U.$

\begin{remark}
Much of the prior work on testing hypergraphs uses a slightly different definition of distance.
Namely, a $q$-uniform hypergraph is often defined to be $\epsilon$-far from a property if at least $\epsilon n^q$ edges (instead of $\epsilon \binom{n}{q}$ as above) need to be added or removed to obtain a hypergraph with the property.
This is often slightly more convenient for analysis (and is in fact why we use this version of distance for graph testing), however the two notions are asymptotically equivalent only when $q$ is a fixed constant.
However, in the setting that particularly interests us here where we care about the asymptotic dependence on $q$ as well as on other parameters in the property definition, our notion of distance is strictly more restrictive and more appropriate to the study of (simple) hypergraphs where all hyperedges are sets of distinct vertices.
In particular, a sample complexity with polynomial dependence on $q$ under our notion of distance also implies a matching polynomial dependence on $q$ under the more relaxed notion of distance, but the converse is not true.

We also note that, in either notion, a tester with sample complexity that is polynomial in $q,k$ and $1/\epsilon$ still queries a number of hyperedges of $H$ that is exponential in $q.$
This is due to the fact that, for a sample $S$ of vertices, the induced subgraph $H[S]$ has up to $\binom{|S|}{q}$ potential hyperedges.
\end{remark}

\paragraph{Testing satisfiability.}
An instance of a \emph{constraint satisfaction problem} (CSP) on a set of $n$ variables that can each take one of $k$ possible values is a set of predicates, which we call \emph{constraints}, on these variables.
A CSP is \emph{$q$-uniform} when all its constraints are predicates on exactly $q$ variables.

The CSP $\phi$ is \emph{satisfiable} if and only if there is an assignment of values to the variables such that all of the constraints are satisfied.
The \qkSAT property is the set of all $q$-uniform CSPs on $n$ variables that each can take one of $k$ possible values.
A $q$-uniform CSP $\phi$ on $n$ variables is \emph{$\epsilon$-far} from \qkSAT (or $\epsilon$-far from satisfiable) if and only if at least $\epsilon \binom{n}{q}$ constraints need to be removed to turn it into a satisfiable CSP.

For any set $S$ of variables, we write $\phi[S]$ to denote the restriction of the CSP $\phi$ to the set $S$.
The restriction $\phi[S]$ consists of the set of constraints whose variables are all present in $S$.

\paragraph{Semi-homogeneous graph partition properties.}
Graph partition properties have played a central role in property testing ever since the original work of Goldreich, Goldwasser, and Ron~\cite{goldreichPropertyTesting1998}.
A graph property $\Pi$ is a \emph{$k$-partition} property if it can be described by a matrix of $2k^2$ parameters $\ell_{1,1},\ldots,\ell_{k,k},u_{1,1},\ldots,u_{k,k}$ where $0 \le \ell_{i,j} \le u_{i,j} \le 1$ for each $i,j \in [k]$ and where a graph $G$ has property $\Pi$ if and only if there is a partition of the vertices of $G$ into the sets $V_1,\ldots,V_k$ for which the number of edges $e_G(V_i,V_j)$ between vertices in $V_i$ and $V_j$ satisfies
\[
	\ell_{i,j} \le \frac{e_G(V_i,V_j)}{|V_i| \cdot |V_j|} \le u_{i,j} 
\]
for all $i \neq j \in [k]$, and the number of edges $e_G(V_i)$ in the induced subgraph $G[V_i]$ satisfies
\[
	\ell_{i,i} \le \frac{e_G(V_i)}{\binom{|V_i|}{2}} \le u_{i,i} 
\]
for all $i \in [k]$.
(Note that there are different classes of partition properties in which one can add other types of constraints, such as absolute bounds on the sizes of the parts and/or on the number of edges between different parts. 
In this paper, however, we focus our attention on the class of partition properties described above.)

An interesting subclass of partition properties is that of \emph{homogeneous partition properties}, where for every $i,j \in [k]$ the density bounds satisfy $\ell_{i,j} = u_{i,j} \in \{0,1\}$. 
That is, for every pair of parts $V_i$ and $V_j$ and any graph that satisfies a homogeneous property, all vertices in $V_i$ must be connected to every vertex in $V_j$ (if $\ell_{i,j} = u_{i,j} = 1$), or none of those edges must be present in $G$ (if $\ell_{i,j} = u_{i,j} = 0$).
Examples of homogeneous graph properties include the property of being an empty graph (which is a $1$-partition property) and the property of being a collection of $k$ isolated cliques.

The class of \emph{semi-homogeneous partition properties} is a superclass of the set of all homogeneous partition properties obtained by relaxing the constraint on the densities to require $\ell_{i,j}, u_{i,j} \in \{0,1\}$ but not that $\ell_{i,j} = u_{i,j}$.
With this definition, the required density between some parts can be 0 or 1, as with homogeneous properties, but for some other pairs of parts we can have no restrictions.
The class of semi-homogeneous partition properties is large and includes many natural properties that are not homogeneous, including most notably $k$-colorability.

\section{Satisfiability Testing}
\label{sect:testing-sat}
In this section we prove \Cref{thm:sat}.
In particular, we are given a CSP $\phi,$ with $n$ variables, exactly $q$ variables per constraint and an alphabet of size $k,$ and wish to distinguish between the case that $\phi$ is satisfiable and the case that $\phi$ is $\epsilon$-far from satisfiable.
As in \cite{Sohler12}, we give a canonical tester that takes a sample $S$ of variables selected uniformly at random, and checks whether there exists an assignment to the variables in $S$ that satisfies $\phi[S].$
If $\phi$ is satisfiable, then $\phi[S]$ will certainly be satisfiable.
The main challenge of the proof is demonstrating that when $\phi$ is $\epsilon$-far from satisfiable, then $\phi[S]$ is satisfiable with low probability.

In order to handle this case, we start by describing a hypergraph container procedure and proving some basic properties.
Then, we define how to construct the hypergraph $H_\phi$ given a CSP $\phi,$ and prove a hypergraph container lemma for independent sets in $H_\phi$ when $\phi$ is $\epsilon$-far from satisfiable.
Finally, we use this hypergraph container lemma to prove \Cref{thm:sat}.

We note that throughout this section we assume that $q=o(n).$
We can do this because the bound on the sample complexity in \Cref{thm:sat} is trivial when $q=\Omega(n).$

\subsection{Container Procedure}
We now describe our hypergraph container procedure.
The container procedure takes as input a $q$-uniform hypergraph $H=(V,E),$ a parameter $n < |V|,$ and an independent set $I \subseteq V$, and produces a sequence of fingerprints and containers.
The procedure constructs this sequence iteratively.
After setting the initial container $C_0$ to be $V$ and the initial fingerprint $F_0$ to be empty, the procedure runs for $|I|/(q-1)$ iterations,\footnote{Note that for clarity of presentation, here and throughout we ignore all integer rounding issues as they do not affect the asymptotics of the final result.} where in each iteration the procedure selects $(q-1)$ distinct fingerprint vertices from the independent set and reduces the size of the container accordingly.
These fingerprint vertices are selected in a greedy fashion to try and remove as many vertices from the container as possible.

The basic idea for selecting these fingerprint vertices and updating the containers is as follows.
Suppose the current container is $C,$ and suppose $v$ is the vertex contained in $I$ with highest degree in $H[C].$
First observe that all vertices with degree higher than $v$ in $H[C]$ are certainly not in the independent set, and so they are removed from the container.
Further, every $q$-edge in $H[C]$ can be viewed as a set of $q$ vertices that cannot all be contained in the independent set.
By constructing a $(q-1)$-uniform hypergraph with vertex set $C \setminus v$ and with an edge set that is all formed by removing $v$ from all $q$-edges containing $v,$ we have that each $(q-1)$-edge in this new hypergraph can be viewed as $q-1$ vertices that cannot all be in the independent set.
And so we recurse on the new $(q-1)$-uniform hypergraph.

In order to optimize the procedure for hypergraphs constructed from CSPs, we modify this basic approach as follows.
Since our procedure will run on a hypergraph $H_\phi$ constructed from a CSP $\phi$ on $n$ variables that is $\epsilon$-far from satisfiable, and this hypergraph has the important property that every $n$-subgraph corresponding to unique variables is dense (see \Cref{sect:sat-GCL} for specific details), it is useful to consider the degree of a vertex in induced subgraphs on at most $n$ vertices.

In more detail, the first fingerprint vertex selected in each iteration will be the vertex which achieves the highest degree in some induced subgraph on at most $n$ vertices, which we call a $(\leq n)$-subgraph, of the current container.
For example, if the current container is $C \subseteq V[H],$ then for each vertex $v$ we calculate the quantity \[\deg_{H[C]}^{\leqslant n}(v) \coloneqq \max_{D \in \binom{C}{\leq n}}\left(\deg_{H[D]}(v)\right)\] and select the fingerprint vertex to be the vertex of $I$ achieving the highest value, where ties are broken based on an assumed ordering of the vertices $V[H].$

After selecting the fingerprint vertex $v_{t,q},$ where $t$ denotes that we are in the $t$-th iteration and the $q$ denotes the fact that the vertex was selected from a $q$-uniform hypergraph, the rest of the iteration consists of running an inner loop that begins with the subgraph that achieves this maximum.
In particular, construct a $(q-1)$-uniform hypergraph $\cD_{t,q-1}$ with vertex set \[ V[\cD_{t,q-1}] = \argmax_{D \in \binom{C}{\leq n}} \left(\deg_{H[D]}(v_{t,q})\right) \setminus v_{t,q},\] where if there are multiple $D$ sets reaching the maximum then an arbitrary maximal set is selected.
The edge set of $\cD_{t,q-1}$ is constructed by removing $v_{t,q}$ from all $q$-edges of $H[C]$ that contain $v_{t,q}.$
The edge set corresponds to sets of vertices that cannot all be part of the independent set.

The inner loop follows the basic approach described above.
In particular, the loop runs from $\ell=q-1, \dots, 2,$ and in each step it selects the vertex $v_{t,\ell}$ in $I$ with highest degree in the current hypergraph $\cD_{t,\ell}$ and adds it to the fingerprint.
It then constructs a new $(\ell-1)$-uniform hypergraph $\cD_{t,\ell-1}$ that has edge set consisting of all edges of $\cD_{t,\ell}$ containing $v_{t,\ell},$ with $v_{t,\ell}$ removed.
In each step, we also keep track of the vertices with higher degree than the vertex selected, and these vertices are removed after the inner loop finishes because they cannot be part of the independent set.

Finally, after the inner loop is completed, we are left with a $1$-uniform hypergraph $\cD_{t,1}.$
Since, in each step, the edge set corresponds to sets of vertices that cannot all be part of the independent set, we have that all the $1$-edges in $\cD_{t,1}$ correspond to vertices that cannot be part of the independent set. So these vertices are also removed.
This completes one iteration, and this is repeated $|I|/(q-1)$ times until all vertices of $I$ have been selected.

The procedure outputs the sequence of fingerprints $F_1,F_2,\ldots,F_{|I|/(q-1)}$ and their corresponding containers $C_1,C_2,\ldots,C_{|I|/(q-1)}$ generated by the algorithm for the independent set $I$.
We refer to $F_t$ and $C_t$ as the \emph{$t$-th fingerprint} and \emph{$t$-th container} of $I$, respectively.
When we consider the fingerprints or containers of multiple independent sets, we write $F_t(I)$ and $C_t(I)$ to denote the $t$-th fingerprint and container of the independent set $I$.
(When the current context specifies a single independent set, we will continue to write only $F_t$ and $C_t$ to keep the notation lighter.)
We note that it is also convenient to extend the definition with $C_t=F_t=I$ for all $t > |I|/(q-1).$

\bigskip
\begin{algorithm}[!htbp]
\caption{\sc{Fingerprint \& Container Generator}}
\label{alg:FingerprintContainer}
\smallskip
\KwIn{A $q$-uniform hypergraph $H = (V,E),$ a parameter $n < |V|,$ and an independent set $I \subseteq V$ }
\medskip
Initialize $F_0 \gets \emptyset$ and $C_0 \gets V(H)$\;

\For{$t=1,2,\dots,|I|/(q-1)$} {
    $v_{t,q} \gets$ the vertex in $I \cap C_{t-1}$ that has the largest value of $\deg_{H[C_{t-1}]}^{\leqslant n}(v)$\;
    
    \smallskip
    \tcp{Keep track of all vertices with higher $q$-degree than $v_{t,q}$ to be removed later}
    $X_q \gets \big\{ w \in C_{t-1} : \deg_{H[C_{t-1}]}^{\leqslant n}(w) > \deg_{H[C_{t-1}]}^{\leqslant n}(v_{t,q}) \big\}$\;
    
    \smallskip
    \tcp{Construct $(q-1)$-uniform hypergraph $\cD_{t,q-1}$ on the set vertices of $C_{t-1},$ of size at most $n,$ that $v_{t,q}$ has highest degree in. The edge set of $\cD_{t,q-1}$ is all edges incident to $v_{t,q},$ with $v_{t,q}$ removed}
    $V[\cD_{t,q-1}] \gets \left(\argmax_{D \in \binom{C_{t-1}}{\leq n}}\deg_{H[D]}(v_{t,q})\right) \setminus v_{t,q}$\;
    $E[\cD_{t,q-1}] \gets \big\{ e \setminus v_{t,q}: e \in E[H[C_{t-1}]] \textrm{ and } v_{t,q} \in e \big\}$\;
    
    \smallskip
    \For{$\ell = q-1, \dots, 2$} {
        $v_{t,\ell} \gets$ the vertex in $I \cap V[\cD_{t,\ell}]$ with largest $\ell$-degree in $\cD_{t,\ell}$;

        \smallskip
        \tcp{Keep track of all vertices with higher $\ell$-degree than $v_{t,\ell}$ in $\cD_t$ to remove later}
        $X_\ell \gets \big\{ w \in V[\cD_{t,\ell}] : \deg_{\cD_{t,\ell}}(w) > \deg_{\cD_{t,\ell}}(v_{t,\ell}) \big\}$\;
        
        \smallskip
        \tcp{Construct $\cD_{t,\ell-1}$ on the vertices of $\cD_{t,\ell}$ minus $v_{t,\ell},$ and containing all edges incident to $v_{t,\ell},$ with $v_{t,\ell}$ removed}
        $V[\cD_{t,\ell-1}] \gets V[\cD_{t,\ell}] \setminus v_{t,\ell}$\;
        $E[\cD_{t,\ell-1}] \gets \big\{ e \setminus v_{t,\ell}: e \in E[\cD_{t,\ell}] \textrm{ and } v_{t,\ell} \in e\big\}$\;
    }
    \smallskip 
    \tcp{Add $v_{t,2},\dots,v_{t,q}$ to the fingerprint}
    $F_t \gets F_{t-1} \cup \{v_{t,2},\dots,v_{t,q}\}$\;
    
    \smallskip
    \tcp{Update container set by removing all $X_\ell$ sets, all vertices with a $1$-edge, and the new fingerprint vertices}
    $C_t \gets C_{t-1} \setminus \left(\cup_{\ell =2}^q X_\ell \cup \{v \in C_{t-1}: \{v\} \in E[\cD_{t,1}]\} \cup \{v_{t,2},\dots,v_{t,q}\}\right)$\;
    
}
Return $F_1,\ldots,F_{|I|/q}$ and $C_1,\ldots,C_{|I|/q}$\;
\end{algorithm}
\bigskip

\subsection{Basic Properties}
By construction, the sequence of fingerprints and containers of an independent set $I$ satisfy a number of useful properties.
For instance, we have $F_1 \subseteq F_2 \subseteq \cdots \subseteq F_{|I|/(q-1)} \subseteq I$ and $I \setminus F_{|I|/(q-1)} \subseteq C_{|I|/(q-1)} \subseteq C_{|I|/(q-1)-1} \subseteq \cdots \subseteq C_2 \subseteq C_1,$
so that for each $t$, $F_t \subseteq I$ and $I \setminus F_t \subseteq C_t$.

The construction of the algorithm also guarantees that the $t$-th container of an independent set is the same as the $t$-th container of its $t$-th fingerprint.
In other words, even for a large independent set, the $t$-th container is uniquely defined by the $t$-th fingerprint of size $t \cdot (q-1),$ which may be much smaller than the size of the independent set.
This can be used to show an upper bound on the number of possible containers, which will be useful later in the proof of \Cref{thm:sat}.

\begin{proposition}
\label{prop:container-closure}
For any hypergraph $H = (V,E)$, any independent set $I$ in $H$, and any $t,$ the fingerprint $F_t(I)$ and container $C_t(I)$ of $I$ satisfy 
\[
C_t\big( F_t(I) \big) = C_t(I).
\]
\end{proposition}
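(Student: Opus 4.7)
The plan is a straightforward induction on $t$, showing that running the Fingerprint \& Container Generator on input $(H, n, F_t(I))$ produces exactly the same sequence of fingerprints $F_1, \ldots, F_t$ and containers $C_1, \ldots, C_t$ as the original run on input $(H, n, I)$. Since $F_t(I) \subseteq I$ is itself an independent set of size $t(q-1)$, the run on $F_t(I)$ halts after exactly $t$ iterations, which directly yields $C_t(F_t(I)) = C_t(I)$ once the runs are shown to coincide.

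The base case $t = 0$ is immediate: both runs begin with $F_0 = \emptyset$ and $C_0 = V$. For the inductive step, suppose the two runs have agreed through iteration $s-1$, arriving at identical sets $F_{s-1}$ and $C_{s-1}$. In the outer step of iteration $s$, the original run selects $v_{s,q}$ as the maximizer of $\deg^{\leq n}_{H[C_{s-1}]}(\cdot)$ over $I \cap C_{s-1}$, with ties broken by the fixed vertex ordering. The key observation is that, by the construction of $F_s$, we have $v_{s,q} \in F_s \subseteq F_t(I)$, so $v_{s,q}$ lies in the smaller candidate pool $F_t(I) \cap C_{s-1} \subseteq I \cap C_{s-1}$. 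Since the maximum of a fixed function over a subset that still contains the original maximizer is attained at that same vertex, and tie-breaking is deterministic, the run on $F_t(I)$ selects the same $v_{s,q}$.

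Once $v_{s,q}$ agrees between the two runs, the auxiliary hypergraph $\cD_{s,q-1}$ and the removed set $X_q$ depend only on $H$, $C_{s-1}$, $v_{s,q}$, and $n$, so they are identical across both runs. The inner loop is then handled by applying the same argument iteratively: for each $\ell$ from $q-1$ down to $2$, the original run selects $v_{s,\ell}$ as the max-degree vertex in $I \cap V[\cD_{s,\ell}]$, and since $v_{s,\ell} \in F_s \subseteq F_t(I)$, restricting the candidate pool from $I$ to $F_t(I)$ leaves the same tie-broken maximizer. Consequently all the $v_{s,\ell}$, $X_\ell$, and $\cD_{s,\ell-1}$ agree across the two runs, and therefore $F_s$ and $C_s$ also agree.

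Iterating through all $t$ steps gives $C_t(F_t(I)) = C_t(I)$, as desired. There is no real obstacle here: the whole argument rests on the single observation that each greedy choice made by the procedure selects a vertex that is retained in $F_t(I)$ by construction, so shrinking the candidate pool from $I$ to any set containing $F_t(I)$ cannot alter the subsequent choices. The main thing to be careful about is bookkeeping across the nested loop structure, i.e.\ verifying that this invariance propagates not only across outer iterations but also through each of the $q-1$ inner-loop steps within a single iteration.
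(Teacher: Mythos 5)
Your proposal is correct and takes essentially the same approach as the paper: the paper's proof is just a one-sentence assertion that running the procedure on $F_t(I)$ selects the same vertices $v_{1,q},\ldots,v_{t,2}$ and hence forms the same container, which is exactly the invariance you prove by induction (the paper additionally notes the trivial case $t > |I|/(q-1)$ handled by the convention $C_t = F_t = I$).
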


\begin{proof}
If $t> |I|/(q-1)$ then $F_t(I)=I=C_t(I)$ and the equality holds.
Now consider the case that $t \leq |I|/(q-1).$ 
If $v_{1,q},\ldots v_{1,2},v_{2,q}, \ldots,v_{2,2},\ldots,v_{t,q},\ldots,v_{t,2}$ are the vertices selected in the first $t$ iterations of the container algorithm on input $I$, then the algorithm selects the same vertices and forms the same container $C_t$ when provided with input $F_t(I)$ instead of $I$.
\end{proof}

Next, we demonstrate that the container procedure makes progress in every step of the procedure by finding sparser and sparser containers.
In order to show this, we first require a basic property about the number of vertices with nearly average degree in a hypergraph.
The average degree in a $\ell$-uniform hypergraph on $|E|$ edges is $\ell|E|/|V|.$
In the following proposition, we show that if $|E|$ is sufficiently large, then a large number of vertices have degree at least $(\ell-1)|E|/|V|.$

\begin{proposition}
\label{prop:edges-bound}
    Let $H=(V,E)$ be a $\ell$-uniform hypergraph, with $\ell \geq 2.$
    Then there are at least $|E|/\binom{|V|-1}{\ell-1}$ vertices in $H$ with degree larger than $(\ell-1)|E|/|V|.$
\end{proposition}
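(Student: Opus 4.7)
The plan is to bound the sum of degrees from above in two different ways and compare. The handshake identity for an $\ell$-uniform hypergraph gives $\sum_{v \in V} \deg(v) = \ell |E|$, so the average degree is exactly $\ell |E|/|V|$. I want to show that if we call $A \coloneqq \{v \in V : \deg(v) > (\ell-1)|E|/|V|\}$ the set of high-degree vertices, then $|A|$ cannot be too small.

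First I would split the degree sum as $\sum_{v \in A} \deg(v) + \sum_{v \notin A} \deg(v) = \ell |E|$. For vertices outside $A$, each contributes at most $(\ell-1)|E|/|V|$ by definition, and since there are at most $|V|$ such vertices, the second sum is at most $(\ell-1)|E|$. For vertices inside $A$, I would use the trivial bound $\deg(v) \leq \binom{|V|-1}{\ell-1}$, which holds because any edge through $v$ is determined by a choice of $\ell-1$ other vertices from $V \setminus \{v\}$. This gives $\sum_{v \in A} \deg(v) \leq |A| \cdot \binom{|V|-1}{\ell-1}$.

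Combining these estimates yields
\[
	\ell |E| \;\leq\; |A| \cdot \binom{|V|-1}{\ell-1} + (\ell-1)|E|,
\]
so $|E| \leq |A| \cdot \binom{|V|-1}{\ell-1}$, and rearranging gives the desired bound $|A| \geq |E|/\binom{|V|-1}{\ell-1}$.

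I do not anticipate any real obstacle here, since the argument is just a one-line averaging calculation once the correct split is chosen. The only subtlety is that the claim is a strict inequality ($\deg(v) > (\ell-1)|E|/|V|$ rather than $\geq$), but this is automatically respected by the derivation: the upper bound $(\ell-1)|E|/|V|$ on degrees outside $A$ is used, so vertices achieving exactly this value do not need to be counted in $A$, and the conclusion still goes through.
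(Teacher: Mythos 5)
Your proof is correct and is essentially the same argument as the paper's: both split the degree sum (equivalently the edge count) into high-degree vertices bounded by $\binom{|V|-1}{\ell-1}$ and the remaining at most $|V|$ vertices bounded by $(\ell-1)|E|/|V|$, the only difference being that you argue directly while the paper phrases it as a proof by contradiction. The handling of the strict inequality is also fine.
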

\begin{proof}
Suppose not.
In other words, suppose that there are strictly less than $|E|/\binom{|V|-1}{\ell-1}$ vertices with degree larger than $(\ell-1)|E|/|V|.$
These vertices have degree at most $\binom{|V|-1}{\ell-1}.$
The remaining vertices, of which there are at most $|V|,$ have degree at most $(\ell-1)|E|/|V|.$
So the total number of edges is less than $\frac{1}{\ell} \left(\frac{|E|}{\binom{|V|-1}{\ell-1}} \cdot \binom{|V|-1}{\ell-1} + |V| \cdot \frac{(\ell-1)|E|}{|V|}\right) = |E|,$ a contradiction.    
\end{proof}

We can now state and prove the property that demonstrates that the container procedure makes progress in every iteration.
In particular, we show that as $t$ increases, we get a stronger upper bound on the maximum degree of every $(\leq n)$-subgraph of $H[C_t].$

\begin{proposition}
\label{prop:container-degree}
Let $H=(V,E)$ be a $q$-uniform hypergraph with $|V|=kn,$ let $I$ be an independent set in $H,$ and let $t \leq |I|/(q-1).$
Then the maximum $q$-degree of every $(\leq n)$-subgraph of $H[C_t]$ is at most $\frac{2kq}{t}\binom{n-1}{q-1}.$
\end{proposition}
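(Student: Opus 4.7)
The plan is to control the sequence $d_s := \deg_{H[C_{s-1}]}^{\leqslant n}(v_{s,q})$ for $s = 1, 2, \ldots, t$. I will establish (a) that the maximum $q$-degree of every $(\leq n)$-subgraph of $H[C_t]$ is bounded by $d_t$ and that the sequence $(d_s)$ is non-increasing, and (b) that $d_t \leq \frac{2kq}{t}\binom{n-1}{q-1}$. Together these give the proposition.

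For (a), the key observation is that if $w \in C_t$ then $w$ has survived every iteration, so in particular $w \notin X_q$ at iteration $s$ for each $s \leq t$; this forces $\deg_{H[C_{s-1}]}^{\leqslant n}(w) \leq \deg_{H[C_{s-1}]}^{\leqslant n}(v_{s,q}) = d_s$. Since $C_t \subseteq C_{s-1}$, the quantity $\deg_{H[C_t]}^{\leqslant n}(w)$ (a max over a smaller family of candidate subgraphs) is at most $\deg_{H[C_{s-1}]}^{\leqslant n}(w) \leq d_s$. Specializing to $s = t$ bounds the max $q$-degree of every $(\leq n)$-subgraph of $H[C_t]$ by $d_t$. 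Applying the same reasoning to $w = v_{s+1,q} \in I \cap C_s$ gives $d_{s+1} \leq d_s$, so the sequence is non-increasing.

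For (b), because $(d_s)$ is non-increasing it suffices to prove $\sum_{s=1}^{t} d_s \leq 2kq \binom{n-1}{q-1}$; averaging then yields $t \cdot d_t \leq \sum_{s=1}^{t} d_s \leq 2kq \binom{n-1}{q-1}$. The starting point is to define the edge sets $E_s := \{e \in E[H[D_s]] : v_{s,q} \in e\}$, which satisfy $|E_s| = d_s$ and are pairwise disjoint across iterations: every $e \in E_s$ contains $v_{s,q}$, and $v_{s,q}$ is deleted from the container after step $s$ and hence is absent from $D_{s'}$ for all $s' > s$. Each $E_s$ also sits inside $H[D_s]$, a subhypergraph on at most $n$ vertices.

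The main obstacle is that this disjointness alone only yields $\sum d_s \leq |E[H]| \leq \binom{kn}{q}$, which is far too weak. To sharpen it, the plan is to apply \Cref{prop:edges-bound} to $H[D_s]$: from $|E[D_s]| \geq d_s$ and $|D_s| \leq n$, there are at least $d_s / \binom{n-1}{q-1}$ vertices of $D_s$ whose $H[D_s]$-degree exceeds the threshold $(q-1)|E[D_s]|/|D_s|$. Whenever this threshold itself exceeds $d_s$, all such vertices satisfy $\deg_{H[C_{s-1}]}^{\leqslant n} > d_s$ and are forced into $X_q$ to be removed in iteration $s$. Since the total number of vertices ever removed across all $t$ iterations is at most $|V| = kn$, this caps how often $d_s$ can be large; the complementary regime where $|E[D_s]|$ is too small for this threshold to exceed $d_s$ must be handled separately, typically by bounding $d_s$ directly from the local density. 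Balancing the two regimes to produce exactly the factor $2kq$ is the technical heart of the proof.
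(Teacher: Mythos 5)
Your step (a) is sound: a vertex $w\in C_t$ survives the test defining $X_q$ at every iteration $s\le t$, so $\deg_{H[C_{s-1}]}^{\leqslant n}(w)\le d_s$, and monotonicity of $\deg^{\leqslant n}$ in the ambient container gives both the bound by $d_t$ on all $(\leq n)$-subgraphs of $H[C_t]$ and the fact that $(d_s)$ is non-increasing; this matches how the paper transfers a bound on the selected vertex to all of $C_t$. The genuine gap is in step (b). The only removal mechanism your accounting uses is the top-level rule $X_q$, fed by \Cref{prop:edges-bound} applied to the $q$-uniform hypergraph $H[D_s]$. But $D_s$ is chosen to maximize the degree of $v_{s,q}$, not the number of edges, so $H[D_s]$ may contain essentially no edges besides the $d_s$ edges through $v_{s,q}$ (a sunflower-type configuration). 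In that case the threshold $(q-1)|E[H[D_s]]|/|D_s|\approx (q-1)d_s/n$ is far below $d_s$, no vertex is forced into $X_q$, and in your ``complementary regime'' there is no useful ``local density'' bound: $d_s$ can be as large as $\binom{n-1}{q-1}$ there, while the disjointness of the sets $E_s$ only returns the $\binom{kn}{q}$-type bound you already discarded. In such configurations the container does keep shrinking, but only through the inner loop of \Cref{alg:FingerprintContainer} --- the link hypergraphs $\cD_{s,\ell}$, the sets $X_\ell$ for $\ell<q$, and the removal of vertices carrying $1$-edges of $\cD_{s,1}$ --- none of which your sketch ever uses. So the ``balancing of the two regimes'' that you defer to cannot be carried out with the tools you have set up.

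What is missing is exactly a per-iteration statement ``if $d_s$ is large, then many vertices are removed in iteration $s$,'' and producing it is where the inner loop enters. The paper's proof first applies pigeonhole to $\sum_{s\le t}|C_{s-1}\setminus C_s|\le kn$ to find one iteration $j\le t$ with $|C_{j-1}\setminus C_j|\le kn/t$, and then runs an induction over the levels $\ell=1,\dots,q-1$, applying \Cref{prop:edges-bound} to each link $\cD_{j,\ell}$: if $\cD_{j,\ell}$ had more than $\frac{2k\ell}{t}\binom{n-(q-\ell)}{\ell}$ edges, then either the next link down would violate the level-$(\ell-1)$ bound or more than $kn/t$ vertices (high-degree vertices in $X_\ell$, or $1$-edge vertices at the bottom) would be deleted, a contradiction; hence $d_j=|E[\cD_{j,q-1}]|\le\frac{2k(q-1)}{t}\binom{n-1}{q-1}$, and your argument (a) at $s=j$ together with $C_t\subseteq C_j$ finishes. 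Note also that your reduction to $\sum_{s\le t}d_s\le 2kq\binom{n-1}{q-1}$ is a strictly stronger claim than the proposition (the proposition only yields $d_s\lesssim\frac{2kq}{s}\binom{n-1}{q-1}$, whose sum carries an extra $\ln t$), so it does not make the task easier: proving it would still require a bound of the shape $d_s\lesssim\frac{q}{n}\binom{n-1}{q-1}\,|C_{s-1}\setminus C_s|$, i.e.\ precisely the missing link between $d_s$ and the per-iteration removals, which for this algorithm is delivered only by the multi-level link analysis; descending into the links to get it would amount to reconstructing the paper's induction.
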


\begin{proof}
Fix any $t \le |I|/(q-1)$.
Since the containers satisfy the containment $V = C_0 \supseteq C_1 \supseteq \dots \supseteq C_t$, then
\[
kn \ge |C_0| - |C_t| = \sum_{i=1}^t \big( |C_{i-1}| - |C_i| \big)= 
\sum_{i=1}^t |C_{i-1} \setminus C_i|
\]
and there must be an index $j \in [t]$ for which $|C_{j-1} \setminus C_j| \leq kn/t$.

We claim that for each step of the inner loop of the $j$th iteration, the number of $\ell$-edges in $\cD_{j,\ell}$ is at most $\frac{2k\ell}{t} \binom{n-(q-\ell)}{\ell}.$
First we show that this claim suffices to prove the proposition.
Using the claim, the number of $(q-1)$-edges in $\cD_{j,q-1}$ is at most $\frac{2k(q-1)}{t} \binom{n-1}{q-1},$ and so the fingerprint vertex $v_{j,q}$ has $q$-degree at most $\frac{2k(q-1)}{t} \binom{n-1}{q-1}$ in every induced $(\leq n)$-subgraph of $H[C_{j-1}].$
Hence, all vertices that had higher degree in some $(\leq n)$-subgraph of $H[C_{j-1}]$ were excluded from $C_j$, so the maximum degree in any $(\leq n)$-subgraph of $H[C_j]$ is at most $\frac{2k(q-1)}{t} \binom{n-1}{q-1}.$
Since $C_t \subseteq C_j$, the maximum degree of any $(\leq n)$-subgraph of $H[C_t]$ is also bounded above by $\frac{2k(q-1)}{t} \binom{n-1}{q-1}$ as well.

We now proceed to prove the claim by induction. In the $\ell=1$ base case, the vertices corresponding to $1$-edges in $\cD_{j,1}$ are removed from $C_j.$
Since $|C_{j-1} \setminus C_j| \leq \frac{kn}{t}$ then the number of $1$-edges is at most $\frac{kn}{t},$ which is at most $\frac{2k (n-q+1)}{t}$ since $q=o(n).$
In the inductive case $\ell \leq q-1,$ suppose that the number of $\ell$-edges in $\cD_{j,\ell}$ is more than $\frac{2k\ell}{t} \binom{n-(q-\ell)}{\ell}.$
If $v_{j,\ell}$ has degree larger than $\frac{2k(\ell-1)}{t} \binom{n-(q-\ell)-1}{\ell-1},$ then, by the construction of $\cD_{j,\ell-1},$ the number of $(\ell-1)$-edges in $\cD_{j,\ell-1}$ is larger than $\frac{2k(\ell-1)}{t} \binom{n-(q-\ell)-1}{\ell-1},$ a contradiction with the inductive hypothesis.
Otherwise, by \Cref{prop:edges-bound}, there are at least \[\frac{|E[\cD_{j,\ell}]|}{\binom{|V[\cD_{j,\ell}]|-1}{\ell-1}}\geq \frac{\frac{2k\ell}{t} \binom{n-(q-\ell)}{\ell}}{\binom{n-(q-\ell)-1}{\ell-1}} =\frac{2k(n-(q-\ell))}{t}>\frac{kn}{t}\] vertices with degree larger than $\frac{(\ell-1)|E[\cD_{j,\ell}]|}{|V[\cD_{j,\ell}]|} \geq \frac{2(\ell-1) k\ell \binom{n-(q-\ell)}{\ell}}{t (n-(q-\ell))}=\frac{2k(\ell-1)}{t}\binom{n-(q-\ell)-1}{\ell-1}$ in $\cD_{j,\ell},$ where we use that $|V[\cD_{j,\ell}]| \leq n-(q-\ell)$ in both inequalities and that $q=o(n).$
These high degree vertices are all removed from $C_j$ because they have higher degree than $v_{j,\ell}.$
This results in $|C_{j-1} \setminus C_j| > \frac{kn}{t},$ a contradiction, which completes the proof of the claim.
\end{proof}

\subsection{Hypergraph Container Lemma}
\label{sect:sat-GCL}
Our results on testing satisfiability make use of a standard transformation of CSPs to hypergraphs that we define here.

\begin{definition}
Let $\phi$ be a $q$-uniform CSP on $n$ variables that each can take $k$ possible values.
The \emph{hypergraph representation of $\phi,$} denoted $H_\phi,$ is defined to be a $q$-uniform hypergraph on $k \cdot n$ vertices, where each vertex corresponds to a single variable assignment.
For each constraint of $\phi$, we add one edge for each assignment to the $q$ variables that falsifies the constraint.
\end{definition}

See \Cref{fig:contruct-hypergraph-example} for an example of an edge coming from the constraint that evaluates to false only when the variables $x_1,x_2,x_3$ are all equal.
It's worth noting that edges are not added to sets of vertices that correspond to assignments to repeated variables.
Further, without loss of generality we may assume that for any set of $q$ variables there is at most one constraint on those variables in $\phi,$ and so $H_\phi$ does not have multi-edges.
Throughout this section we utilize the fact that each vertex corresponds to a specific single variable assignment and each $q$-edge can be viewed as an assignment to $q$ variables.

Using this construction, observe that for any subset $U \subseteq V[H_\phi]$ that only contains vertices corresponding to distinct variables, the edges of $H_\phi[U]$ correspond to distinct falsified constraints.
In other words, we have the following property.

\begin{observation}
    \label{prop:edgeConstraintEquiv}
    Let $U \subseteq V$ be a subset of vertices of $H_\phi$ such that the labels of vertices in $U$ are to distinct variables and let $A$ be the $|U|$-variable assignment corresponding to the labels of $U.$ Then the number of edges in $H_\phi[U]$ is exactly the number of constraints, on variables of $U,$ that are falsified by $A.$
\end{observation}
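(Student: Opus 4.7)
The plan is to establish the equality by exhibiting an explicit bijection between $E[H_\phi[U]]$ and the set of constraints on variables of $\mathrm{vars}(U)$ that are falsified by $A$, where $\mathrm{vars}(U)$ denotes the set of distinct variables whose (variable, value) labels appear in $U$. The hypothesis that the labels in $U$ use distinct variables is what makes the partial assignment $A$ well-defined and ensures that both sides of the bijection are unambiguous.

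First I would unpack the construction of $H_\phi$. Each vertex is a (variable, value) pair; because the $|U|$ labels of $U$ use $|U|$ distinct variables, $U$ specifies exactly one value for each variable in $\mathrm{vars}(U)$, which is what the statement calls $A$. For the forward direction, I would take an arbitrary edge $e \in E[H_\phi[U]]$. By the construction of $H_\phi$, the edge $e$ was added because of some constraint $c$ together with a particular falsifying assignment $B$ to the $q$ variables of $c$, and the endpoints of $e$ are exactly those $q$ (variable, value) pairs. Since $e \subseteq U$ and the labels in $U$ use distinct variables, each endpoint of $e$ is the unique vertex of $U$ corresponding to its variable, so the variables of $c$ all lie in $\mathrm{vars}(U)$ and $B$ agrees with $A$ on each of them. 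Hence $c$ is a constraint on variables of $U$ falsified by $A$, and we map $e \mapsto c$.

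For the reverse direction, any constraint $c$ on variables $x_{i_1},\ldots,x_{i_q}$ in $\mathrm{vars}(U)$ that is falsified by $A$ contributes, by the construction of $H_\phi$, an edge whose endpoints are the vertices $(x_{i_j}, A(x_{i_j}))$ for $j = 1,\ldots,q$. Each of these endpoints is the unique vertex in $U$ labelled by $x_{i_j}$, so the edge lies in $E[H_\phi[U]]$ and maps to $c$. The WLOG assumption that each $q$-set of variables has at most one constraint in $\phi$, combined with the fact that $A$ determines a unique assignment to any subset of $\mathrm{vars}(U)$, means the map is injective on both sides, yielding a bijection and hence equality of counts.

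There is really no main obstacle here: the observation is essentially a definition-unpacking exercise. The one subtle point is that the distinct-variables hypothesis must be used explicitly in both directions — without it, $U$ could contain two vertices with the same variable but different values, so the partial assignment $A$ would be ill-defined and multiple edges of $H_\phi[U]$ could be associated with the same constraint via different inconsistent partial assignments, breaking the bijection. So my writeup would simply make sure that this hypothesis is invoked at exactly the two places where it is needed.
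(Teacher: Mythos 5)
Your proof is correct and matches the paper's treatment: the paper states this observation as an immediate consequence of the construction of $H_\phi$ (it gives no explicit proof), and your edge-to-constraint bijection, with the distinct-variables hypothesis and the one-constraint-per-$q$-set assumption invoked exactly where needed, is precisely the careful unpacking of that claim.
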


In particular, we have that $\phi$ is satisfiable if and only if there exists an independent set $I$ in $H_\phi$ with $|I| = n$ and where the vertices of $I$ correspond to distinct variables.
Further, if $\phi$ is $\epsilon$-far from satisfiable, then every induced subgraph of $H_\phi$ that corresponds to an assignment to all $n$ variables contains at least $\epsilon \binom{n}{q}$ edges.

Using this construction, we prove the following hypergraph container lemma.

\begin{lemma}[Reformulation of \Cref{lem:sat} \protect\footnote{The collection of fingerprints/containers in \Cref{lem:sat} is constructed by applying \Cref{lem:GCL-sat} to every independent set in the graph. The properties of the containers follow from the basic properties of the container procedure and \Cref{prop:container-closure}.}]
\label{lem:GCL-sat}
Let $\phi$ be an instance of \qkSAT on $n$ variables.
Let $H_\phi$ be the hypergraph representation of $\phi,$ and let $I$ be an independent set in $H_\phi.$
If $\phi$ is $\epsilon$-far from satisfiable then there exists $t \leq 8kq/\epsilon$ such that the number of distinct variables appearing in labels of $C_t(I),$ denoted ${\rm vars}(C_t(I)),$ is bounded by
\[ {\rm vars}(C_t(I)) \leq \left(1-\frac{\epsilon t}{4kq^2 \ln(kq/\epsilon)}\right)n.\]
\end{lemma}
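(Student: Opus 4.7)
My plan is to argue by contradiction. Suppose that for every $t \leq 8kq/\epsilon$ we have ${\rm vars}(C_t(I)) > \left(1 - \frac{\epsilon t}{4 kq^2 \ln(kq/\epsilon)}\right) n$, and derive a contradiction by combining \Cref{prop:container-degree} with the $\epsilon$-far hypothesis on $\phi$. Fix such a $t$, let $m = {\rm vars}(C_t(I))$, and form a $(\leq n)$-subgraph $D$ of $H[C_t]$ by selecting exactly one vertex per variable represented in $C_t$; this gives an $m$-vertex subgraph whose labels are $m$ distinct variables. By \Cref{prop:container-degree}, every vertex in $D$ has $q$-degree at most $\frac{2kq}{t}\binom{n-1}{q-1}$, and since each edge is counted $q$ times when summing degrees, $|E(D)| \leq \frac{2km}{t}\binom{n-1}{q-1}$.

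Next, I would extend $D$ to an $n$-vertex subgraph $\tilde D$ on $n$ distinct variables by adding an arbitrary vertex of $H_\phi$ for each of the $n - m$ variables not represented in $C_t$. By \Cref{prop:edgeConstraintEquiv} together with the assumption that $\phi$ is $\epsilon$-far from satisfiable, the $n$-variable assignment corresponding to $\tilde D$ falsifies at least $\epsilon \binom{n}{q}$ constraints, so $|E(\tilde D)| \geq \epsilon \binom{n}{q}$. Since each of the $n - m$ added vertices lies in at most $\binom{n-1}{q-1}$ edges of $\tilde D$, the edges of $\tilde D$ entirely within $D$ satisfy $|E(D)| \geq \epsilon \binom{n}{q} - (n - m)\binom{n-1}{q-1}$. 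Combining this lower bound with the degree-based upper bound and dividing through by $\binom{n-1}{q-1} = \frac{q}{n}\binom{n}{q}$ yields the key inequality
\[
    \frac{2km}{t} + (n - m) \geq \frac{\epsilon n}{q}.
\]

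The remaining step is to choose a $t \leq 8kq/\epsilon$ that makes this inequality and the contradiction hypothesis $m > n\bigl(1 - \frac{\epsilon t}{4kq^2 \ln(kq/\epsilon)}\bigr)$ simultaneously impossible. Substituting the lower bound on $m$ into the key inequality produces a quadratic inequality in $t$ whose feasibility is governed precisely by the size of $\epsilon \ln(kq/\epsilon)$, and the $\ln(kq/\epsilon)$ factor in the claimed bound is exactly what is needed to shift the optimal $t$ into the admissible range $(0, 8kq/\epsilon]$. I expect the main obstacle to be verifying that some admissible $t$ is guaranteed to fall in the forbidden interval: depending on the magnitude of $\epsilon \ln(kq/\epsilon)$, one should pick $t$ close to $2kq\ln(kq/\epsilon)$ (where the direct balance between the $\epsilon/q$ and $2k/t$ terms is tightest) or close to $8kq/\epsilon$ (where the degree bound from \Cref{prop:container-degree} is sharpest), and a careful case split will be required to cover both parameter regimes uniformly while still producing the clean linear-in-$t$ slope $\frac{\epsilon}{4kq^2\ln(kq/\epsilon)}$ in the statement.
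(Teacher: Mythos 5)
There is a genuine gap, and it sits exactly at the step you flagged as the ``main obstacle'': no admissible $t$ can produce the contradiction. Your argument invokes \Cref{prop:container-degree} only at a single time $t$ and charges each of the $n-m$ vertices added for the missing variables the trivial bound $\binom{n-1}{q-1}$. Writing $L=\ln(kq/\epsilon)$, your key inequality together with $m\le n$ and the contradiction hypothesis $n-m<\frac{\epsilon t n}{4kq^2L}$ would yield a contradiction only if some $t\le 8kq/\epsilon$ satisfies
\[
\frac{2k}{t}+\frac{\epsilon t}{4kq^{2}L}\;<\;\frac{\epsilon}{q}.
\]
But the left-hand side, minimized over all $t>0$, equals $\frac{\sqrt{2\epsilon}}{q\sqrt{L}}$ (attained near $t=2kq\sqrt{2L/\epsilon}$, which is indeed below $8kq/\epsilon$), and this minimum is below $\epsilon/q$ only when $\epsilon\ln(kq/\epsilon)\ge 2$ --- essentially never in the relevant regime of small $\epsilon$ (the paper even assumes $\epsilon<e^{-8}$). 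So for every admissible $t$ the key inequality is perfectly consistent with the contradiction hypothesis, and no case split over $t$ can rescue the single-snapshot counting; at best this route proves a weaker lemma in which the variable deficiency of $C_t$ scales like $\sqrt{\epsilon t}$ rather than linearly in $\epsilon t$, which would degrade the sample complexity claimed in \Cref{thm:sat}.

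The loss comes from completing the assignment with \emph{arbitrary} vertices for the missing variables, which throws away all information about when those variables left the container. The paper's proof exploits the entire container history: assuming the variable bound fails for \emph{all} $t\le 8kq/\epsilon$, it takes one vertex per variable from the final container $C_{8kq/\epsilon}$, and for each variable that disappears between $C_{t-1}$ and $C_t$ it adds a vertex chosen from $C_{t-1}$; \Cref{prop:container-degree} then bounds that vertex's edge contribution by $\frac{2kq}{t-1}\binom{n-1}{q-1}$ instead of $\binom{n-1}{q-1}$, while the standing hypothesis bounds how many variables can have left by each time $t$. Summing these contributions produces a harmonic series $\sum_{t}\frac{1}{t-1}\approx\ln(kq/\epsilon)$ that is absorbed by the $\ln(kq/\epsilon)$ in the denominator of the slope, giving a total edge count below $\epsilon\binom{n}{q}$ and hence, via \Cref{prop:edgeConstraintEquiv}, the contradiction with $\epsilon$-farness. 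To repair your write-up you would need to replace the ``arbitrary vertex per missing variable'' completion with this time-stratified completion; the rest of your setup (the reduction to counting edges of an $n$-variable induced subgraph, and the use of \Cref{prop:edgeConstraintEquiv}) matches the paper.
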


\begin{proof}
    We prove the contrapositive.
    That is, assume that for all $t \le 8kq/\epsilon$, the number of distinct variables in labels of $C_t(I)$ is bounded by
    \begin{equation}
        \label{eqn:distinctVars}
        {\rm vars}(C_t(I)) > \left(1-\frac{\epsilon t}{4kq^2 \ln(kq/\epsilon)}\right)n.
    \end{equation}
    We will construct an assignment to the $n$ variables of $\phi$ which shows that $\phi$ is not $\epsilon$-far from satisfiable.
    
    Start with an assignment to as many distinct variables as possible coming from the labels of $C_{8kq/\epsilon}(I).$
    In other words, let $Z$ be the set of variables appearing in labels of $C_{8kq/\epsilon}(I),$ and construct $U \subseteq C_{8kq/\epsilon}(I)$ by removing vertices from $C_{8kq/\epsilon}(I)$ until each variable in $Z$ appears in exactly one label in $U.$

    For each $t=1,\dots,8kq/\epsilon,$ let $Z_t$ be the set of variables that appear in at least one label of a vertex in $C_{t-1},$ but do not appear in any labels of vertices in $C_t.$
    Further, for each $t=1,\dots,8kq/\epsilon,$ construct $V_t \subset V$  as follows.
    For each variable $z \in Z_t,$ add a single vertex from $C_{t-1}$ to $V_t$ that is labelled with an assignment of $z.$
    Note that if a variable in $Z_t$ appears in the labels of multiple vertices $C_{t-1},$ then we select one of the vertices arbitrarily and add it to $V_t.$

    First observe that $Z \cup Z_1 \cup \dots \cup Z_{8kq/\epsilon}$ form a partitioning of the $n$ variables of $\phi.$
    Hence, $U \cup V_1 \cup \dots \cup V_{8kq/\epsilon}$ contains $n$ distinct vertices, with the labels corresponding to a complete assignment $A$ to the $n$ variables of $\phi.$

    We claim that $H_\phi[U \cup V_1 \cup \dots \cup V_{8kq/\epsilon}]$ contains less than $\epsilon \binom{n}{q}$ edges.
    Before proving the claim, observe that this is sufficient for completing the proof of the lemma because, by \Cref{prop:edgeConstraintEquiv}, the number of constraints falsified by $A$ is less than $\epsilon \binom{n}{q}.$
    Since $A$ is an assignment to the $n$ variables of $\phi,$ then $\phi$ is not $\epsilon$-far from satisfiable.

    We now proceed to prove the claim.
    Since $U \subseteq C_{8kq/\epsilon}(I)$ and $U$ is of size at most $n,$ then, by \Cref{prop:container-degree}, the maximum degree in the subgraph $H[U]$ is at most $\frac{2kq \binom{n-1}{q-1}}{8kq/\epsilon}=\epsilon \binom{n-1}{q-1}/4,$ and so $H[U]$ has at most $\frac{\epsilon \binom{n}{q}}{4}$ $q$-edges.
    In order to count the additional edges in $H_\phi[U \cup V_1 \cup \dots \cup V_{8kq/\epsilon}],$ consider adding the vertices from $V_1,\dots,V_{8kq/\epsilon}$ to $U$ in reverse order (starting at vertices in $V_{8kq/\epsilon}$ and going to $V_1$).
    In this way, for any $t \leq 8kq/\epsilon,$ when $V_t$ is added, it holds that $\left(U \cup V_{8kq/\epsilon} \cup \dots \cup V_t\right) \subseteq C_{t-1}(I).$
    Hence, by \Cref{prop:container-degree}, each $v \in V_t$ contributes at most $\frac{2kq\binom{n-1}{q-1}}{t-1}$ additional edges (or $\binom{n-1}{q-1}$ edges if $t=1$) because $v$ is contained in $C_{t-1}(I).$
    
    Then, the sum of the edges in $H_\phi[U \cup V_1 \cup \dots \cup V_{8kq/\epsilon}]$ can be upper bounded by
    \begin{equation}
        \label{eqn:sum-edges}
        \frac{\epsilon \binom{n}{q}}{4}+|V_1|\cdot \binom{n-1}{q-1}+\sum_{t = 2}^{8kq/\epsilon} |V_t|\frac{2kq\binom{n-1}{q-1}}{t-1}.
    \end{equation}
    
    For any $t \leq 8kq/\epsilon,$ $\bigcup_{j = 1}^t Z_j$ is the set of variables not appearing in labels of $C_t(I),$ and so \eqref{eqn:distinctVars} implies that $\sum_{j= 1}^{t} |Z_j| \leq \frac{t \epsilon n}{4kq^2\ln(kq/\epsilon)}.$
    Since $|Z_j|=|V_j|$ for all $j,$ then $\sum_{j = 1}^{t} |V_j| \leq \frac{t \epsilon n}{4kq^2\ln(kq/\epsilon)}.$
    In the sum \eqref{eqn:sum-edges}, the contribution from $|V_t|$ goes down as $t$ increases, and so the sum is maximized when $|V_t|=\frac{\epsilon n}{4kq^2\ln(kq/\epsilon)}$ for all $t.$
    Hence, the sum of the edges in $H_\phi[U \cup V_1 \cup \dots \cup V_{8kq/\epsilon}]$ can be upper bounded by
    \[\frac{\epsilon \binom{n}{q}}{4}+\frac{\epsilon \binom{n}{q}}{4 kq\ln(kq/\epsilon)} + \frac{\epsilon \binom{n}{q}}{ 2\ln(kq/\epsilon)}\sum_{t=2}^{8kq/\epsilon} \frac{1}{t-1} < \epsilon \binom{n}{q},\]
    where the last inequality uses the upper bound $H(m) \le \ln(m) + 1$ on the harmonic series and the bound $\epsilon < e^{-8}$ that we can apply without loss of generality.
\end{proof}

\subsection{Proof of \autoref{thm:sat}}
\label{sect:sat-proof}
We can now complete the proof of \Cref{thm:sat}, restated below.

\newtheorem*{sat-thm-precise}{Theorem 1}
\begin{sat-thm-precise}[Precise formulation]
The sample complexity of $\epsilon$-testing the \qkSAT property is 
\[
\mathcal{S}_{\qkSAT}(n,\epsilon) = O\left(\frac{k q^3}{\epsilon} \ln^2(kq/\epsilon)\right).
\]
\end{sat-thm-precise}

\begin{proof}
    Let $S$ be a random set of $s = c\frac{k q}{\epsilon} \ln^2(kq/\epsilon)$ variables drawn uniformly at random without replacement from the variable set of $\phi,$ where $c$ is a large enough constant.
    The tester accepts a \qkSAT instance $\phi$ if and only if $\phi[S]$ is satisfiable, where $\phi[S]$ represents the constraints of $\phi$ restricted to $S.$ When $\phi$ is satisfiable, then $\phi[S]$ is also satisfiable and so the tester always accepts.

    In the remainder of the proof, we upper bound the probability that $\phi[S]$ is satisfiable when $\phi$ is $\epsilon$-far from satisfiable. Let us denote the variables of the sampled set $S$ as $u_1,u_2,\ldots,u_s.$ In the following, let us say that a container $C_t(I)$ is \emph{small} when the number of distinct variables in the labels of $C_t(I)$ is at most $\left(1-\frac{\epsilon t}{4kq^2\ln(kq/\epsilon)}\right)n,$ the expression in the conclusion of \Cref{lem:GCL-sat}.
    
    First observe that if $\phi[S]$ is satisfiable, then there exists an assignment to the variables $S$ that corresponds to an independent set $I$ in $H_\phi.$
    By \Cref{lem:GCL-sat}, for \emph{every} independent set $I$ in $H_\phi$, there is a $t \le 8kq/\epsilon$ and a fingerprint $F_t \subseteq I$ of size $t(q-1)$ that defines a small container $C_t(F_t)$ such that $I\setminus F_t \subseteq C_t(F_t).$
    Hence, the probability that $\phi[S]$ is satisfiable is at most the probability that there exists some $t \le 8kq/\epsilon$ and an assignment to $t(q-1)$ variables of $S$ which forms the fingerprint $F_t$ of a small container $C_t,$ and the remaining $s-t(q-1)$ variables of $S$ all comes from the labels of $C_t.$
    We now upper bound this probability. 

    For any $t \leq \frac{8kq}{\epsilon}$ and $t \cdot (q-1)$ distinct indices $i_1,i_2,\ldots,i_{t \cdot (q-1)} \in [s]$, let us consider the event where there exists an assignment to the variables $u_{i_1},\ldots,u_{i_{t(q-1)}}$ that corresponds to the labels of a fingerprint $F_t$ of a small container $C_t$. When this event occurs, the probability that the remaining $s - t(q-1)$ variables in $S$ are all drawn from the variables in the labels of $C_t(F_t)$ is at most $ \left( 1 - \frac{\epsilon t}{4kq^2\ln (kq/\epsilon)} \right)^{s-t(q-1)}$.

    So by applying a union bound argument over all $t \le \frac{8kq}{\epsilon},$ all possible choices of indices and all possible assignments to $t \cdot (q-1)$ variables, the probability that there exists an assignment to any set of at most $\frac{8kq(q-1)}{\epsilon}$ variables in $S$ that corresponds to the labelling of a fingerprint of a small container from which we sample all the remaining variables is at most 
    \[\sum_{t=1}^{8kq/\epsilon} \binom{s}{t (q-1)} \cdot k^{t (q-1)} \cdot \left(1-\frac{\epsilon t}{4kq^2\ln(kq/\epsilon)}\right)^{s-t(q-1)}
    \le \sum_{t=1}^{8kq/\epsilon}\exp\left(tq\ln(ks)-\frac{\epsilon t s}{8kq^2\ln(kq/\epsilon)}\right)
    \]
    where the inequality uses the fact that $s > 2tq$ and upper bound $\binom{s}{t (q-1)} \leq s^{tq}.$
    Using the fact that $s = c\frac{kq^3}{\epsilon} \ln^2(kq/\epsilon)$ for a large enough constant $c,$ the above expression can be upper bounded by $1/3.$
    Hence, the probability that $\phi[S]$ is satisfiable is at most $1/3.$
\end{proof}

\subsection{Proofs of \texorpdfstring{\Cref{cor:k-col,cor:01-partition}}{\autoref{cor:k-col} and \autoref{cor:01-partition}}}
\label{sect:sat-cor-proofs}
We complete this section by proving \Cref{cor:k-col,cor:01-partition} using \Cref{thm:sat}.
We prove \Cref{cor:k-col} by showing that testing whether a $q$-uniform hypergraph is $k$-colorable is a special case of testing $\qkSAT.$
Similarly, we prove \Cref{cor:01-partition} by showing that testing whether a graph $G$ satisfies a property in \kSHPP is a special case of testing \qtwokSAT.

\newtheorem*{sat-cor1-precise}{Corollary 2}
\begin{sat-cor1-precise}[Precise formulation]
The sample complexity of $\epsilon$-testing the \qkCol property is 
\[
\mathcal{S}_{\qkCol}(n,\epsilon) = O\left(\frac{kq^3}{\epsilon}\ln^2(kq/\epsilon)\right).
\]
\end{sat-cor1-precise}
\begin{proof}
    Let $H$ be any $q$-uniform hypergraph on $n$ vertices.
    We wish to distinguish between the case that $H$ is $k$-colorable and the case that $H$ is $\epsilon$-far from $k$-colorable.
    Let $\phi$ be a CSP on $n$ variables, one for each vertex, that can each take values from $[k].$
    The constraints of $\phi$ correspond to the hyperedges of $H.$
    In particular, for every hyperedge in $H$, there is a constraint to the variables associated with the vertices of the edge so that the variables cannot all be assigned the same value from $[k].$
    Using that edges of $H$ correspond exactly to constraints of $\phi$, then $\phi$ is satisfiable if and only if $H$ is $k$-colorable, and $\phi$ is $\epsilon$-far from satisfiable if and only if $H$ is $\epsilon$-far from $k$-colorable.

    We can test whether $H$ is $k$-colorable by simulating the tester from \Cref{thm:sat} on $\phi.$
    In particular, a sample of variables of $\phi$ corresponds exactly to a sample of vertices of $H,$ and when the tester asks to inspect $\phi[S],$ it can be simulated by inspecting $H[\hat{S}],$ where $\hat{S}$ is the vertices associated with the variables of $S.$ 
\end{proof}

\newtheorem*{sat-cor2-precise}{Corollary 3}
\begin{sat-cor2-precise}[Precise formulation]
The sample complexity of $\epsilon$-testing any property $\Pi$ in \kSHPP is 
\[
\mathcal{S}_{\Pi}(n,\epsilon) = O\left(\frac{k}{\epsilon}\ln^2(k/\epsilon)\right).
\]
\end{sat-cor2-precise}
\begin{proof}
    Let $\Pi$ be property in \kSHPP with parameters $\ell_{1,1},\ldots,\ell_{k,k},u_{1,1},\ldots,u_{k,k}$ where $\ell_{i,j}, u_{i,j} \in \{0,1\}$ for each $i,j \in [k].$
    Let $G=(V,E)$ be any graph on $n$ vertices.
    We wish to distinguish between the case that $G$ is in property $\Pi,$ and the case that $G$ is $\epsilon$-far from $\Pi.$
    
    Let $\phi$ be a CSP on $n$ variables $x_1,\dots,x_n$, one for each vertex $v_1,\dots,v_n \in V$, that can each take values from $[k].$
    In order to define the constraints of $\phi,$ we define the following two sets $\Pi_0,\Pi_1 \subseteq [k]^2.$
    Let $\Pi_0=\{(i,j) : i,j \in [k] \textrm { and } \ell_{i,j}=0\}$ and let $\Pi_1=\{(i,j) : i,j \in [k] \textrm { and } u_{i,j}=1\}.$
    In other words, $\Pi_0$ corresponds to possible partition assignments (from $[k]$) to pairs of vertices such that if the vertices are not neighbours then they are consistent with the desired edge densities of $\Pi.$
    Similarly, $\Pi_1$ corresponds to assignments (from $[k]$) to pairs of vertices such that if the vertices \emph{are} neighbours then they are consistent with the desired edge densities of $\Pi.$
    In this way, if it's possible to construct a partition of $V$ where every pair of non-adjacent vertices in $G$ is assigned to pairs of parts from $\Pi_0,$ and every pair of adjacent vertices in $G$ is assigned to pairs of parts from $\Pi_1,$ then $G$ is in \kSHPP.

    Using $\Pi_0$ and $\Pi_1,$ we define the constraints as follows.
    For every pair of non-adjacent vertices $(v_i,v_j)$ in $G,$ $\phi$ has a constraint on $x_i,x_j$ that evaluates to true if and only if $x_i$ and $x_j$ take pairs of assignments from $\Pi_0.$
    Similarly, for every pair of adjacent vertices $(v_i,v_j)$ in $G,$ $\phi$ has a constraint on $x_i,x_j$ that evaluates to true if and only if $x_i$ and $x_j$ take pairs of assignments from $\Pi_1.$

    First observe that $\phi$ is satisfiable if and only if $G$ is in \kSHPP.
    Further, if $G$ is $\epsilon$-far from satisfiable, then, for any partitioning of $G,$ at least $\epsilon n^2$ pairs of vertices violate the desired edge densities of $\Pi.$
    This corresponds to exactly $\epsilon n^2 > \epsilon \binom{n}{2}$ constraints of $\phi$ evaluating to false for any assignment to the $n$ variables, which means that $\phi$ is $\epsilon$-far from satisfiable. 

    We can test whether $G$ is in \kSHPP by simulating the tester from \Cref{thm:sat} on $\phi.$
    In particular, a sample of variables of $\phi$ corresponds exactly to a sample of vertices of $G,$ and when the tester asks to inspect $\phi[S],$ it can be simulated by inspecting $G[\hat{S}],$ where $\hat{S}$ is the vertices associated with the variables of $S.$
\end{proof}

\section{Query Complexity for Testing Independent Sets}
\label{sect:testing-clique-queries}
In this section we prove \Cref{thm:cliques-queries}.
We start by defining independent set stars and giving a container procedure to construct a sequence of inner and outer containers for every independent set star.
Then we show some basic properties of the container procedure, and prove a graph container lemma which roughly states that for every independent set star, there exists an inner and outer container from the sequence that are sufficiently small.
Finally, we use the graph container lemma to prove \Cref{thm:cliques-queries}.

\subsection{Container Procedure}
We start by defining an independent set star.
\begin{definition}
Let $G=(V,E)$ be a graph.
An \emph{independent set star} is a set $I \subseteq V,$ which we call the \emph{core}, and a set $J \subseteq V \setminus I$ such that $I$ is an independent set and none of the vertices in $J$ are adjacent to any vertex in $I.$
We write $(I,J)$ to denote the independent set star.
\end{definition}

We now give our new container procedure for independent set stars.
In particular, for an independent set star $(I,J)$ we construct an \emph{inner} container $C$ that contains $I,$ and an \emph{outer} container $D$ that contains $I \cup J.$
Our procedure only depends on the independent set $I,$ and so it can be viewed as finding an inner and an outer container such that the independent set is contained in the inner container, and the outer container contains all vertices that have no edges to the independent set (which would make them candidates to be part of an independent set star with $I$ as the core).

The construction of the inner containers is very similar to the result by Blais and Seth \cite{blaisSeth}, which itself follows closely from the original use of containers by Kleitman and Winston \cite{kleitmanWinston1982}.
Our algorithm differs primarily on the construction of the outer containers.

After initializing the fingerprint to be empty and the inner and outer containers to be $V,$ the procedure runs for $|I|/2$ iterations.
At each step of the procedure two vertices are added to the fingerprint.
The first is the vertex $u$ of $I$ with largest degree in the current inner container, and the second is the vertex $v$ of $I$ with largest degree in the current outer container, where ties are broken based on an assumed ordering of the vertices $V.$
After selecting the fingerprint vertices, all vertices that neighbour $u$ or $v$ are removed from both containers.
Further, all vertices with degree higher than $u$ in the inner container or with degree higher than $v$ in the outer container are removed from the inner container.
The algorithm repeats until all vertices from the independent set have been added to the fingerprint.

We note that the main difference between the inner and outer container in each step is that higher degree vertices are not removed from the outer container.
This is because we can only remove vertices from the outer container if they have an edge to a vertex in $I.$

\bigskip
\begin{algorithm}[!htbp]
\caption{\sc{Fingerprint \& Container Generator}}
\label{alg:FingerprintContainerQueries}
\smallskip
\KwIn{A graph $G = (V,E)$ and an independent set $I \subseteq V$ }
\medskip
Initialize $F_0 \gets \emptyset$ and $C_0,D_0 \gets V$\;

\For{$t=1,2,\dots,|I|/2$} {
    $u_t \gets$ the vertex in $I \setminus F_{t-1}$ that has the largest degree in $G[C_{t-1}].$\;

    $v_t \gets$ the vertex in $I \setminus F_{t-1} \setminus u_t$ that has the largest degree in $G[D_{t-1}]$\;

    \medskip 
    \tcp{Add $u_t$ and $v_t$ to the fingerprint}
    $F_t \gets F_{t-1} \cup \{u_t,v_t\}$\;
    
    \medskip
    \tcp{Remove all the neighbours of $u_t,v_t$ from the containers}
    $C_t \gets C_{t-1} \setminus \big\{ w \in C_{t-1} : (u_t,w) \in E \textrm{ or } (v_t,w) \in E\big\}$\;
    $D_t \gets D_{t-1} \setminus \big\{ w \in D_{t-1}: (u_t,w) \in E \textrm{ or }(v_t,w) \in E\big\}$\;

    \medskip
    \tcp{And remove all vertices from $C_t$ with higher degree than $u_t$ in $G[C_{t-1}]$ and vertices with higher degree than $v_t$ in $G[D_{t-1}]$}
    $C_t \gets C_t \setminus  \big\{ w \in C_{t-1} : \deg_{G[C_{t-1}]}(w) > \deg_{G[C_{t-1}]}(u_t) \textrm{  or  } \deg_{G[D_{t-1}]}(w) > \deg_{G[D_{t-1}]}(v_t) \big\}$\;    
}
Return $F_1,\ldots,F_{|I|/2},$ $C_1,\ldots,C_{|I|/2}$ and $D_1,\ldots, D_{|I|/2}$\;
\end{algorithm}
\bigskip

As in the previous section, we utilize the sequence of fingerprints $F_1,F_2,\ldots,F_{|I|/2}$ and their corresponding containers $C_1,C_2,\ldots,C_{|I|/2}$ and $D_1,D_2,\ldots,D_{|I|/2}$ generated by the algorithm for the independent set $I$. Further, it is also convenient to extend the definition with $C_t=F_t=I$ and $D_t=D_{|I|/2}$ for all $t>|I|/2.$

We refer to $F_t$ as the \emph{$t$-th fingerprint}, $C_t$ as the \emph{$t$-th inner container}, and $D_t$ as the \emph{$t$-th outer container} of $I$.
When we consider the fingerprints or containers of multiple independent sets, we write $F_t(I), C_t(I)$ and $D_t(I)$ to denote the $t$-th fingerprint and containers of the independent set $I$. (When the current context specifies a single independent set, we will continue to write only $F_t,C_t$ and $D_t$ to keep the notation lighter.)

\subsection{Basic Properties}
By construction, the sequence of fingerprints and containers of an independent set satisfy a number of useful properties. For instance, we have 
\[
F_1 \subseteq F_2 \subseteq \cdots \subseteq F_{|I|/2} = I \subseteq C_{|I|/2} \subseteq C_{|I|/2-1} \subseteq \cdots \subseteq C_2 \subseteq C_1
\]
so that for each $t = 1,2,\ldots,|I|/2$, $F_t \subseteq I \subseteq C_t$.
We also have that for all $t,$ the set $D_t$ contains all vertices that have no neighbours in $I$ because the only vertices removed from $D_t$ are those that are adjacent to a fingerprint vertex.
As a result, for any $t$ and any set $J \subseteq V \setminus I$ that, along with $I,$ forms an independent set star $(I,J),$ we have that $J \subseteq D_t.$ 
Finally, it is useful to note that $C_t \subseteq D_t$ for all $t$ because any time a vertex is removed from $D_t$ it is also removed from $C_t.$

The construction of the algorithm also guarantees that the $t$-th container of an independent set is the same as the $t$-th container of its $t$-th fingerprint.

\begin{proposition}
\label{prop:container-closure-queries}
For any graph $G = (V,E)$, any independent set $I$ in $G$, and any $t,$ the fingerprint $F_t(I)$ and containers $C_t(I),D_t(I)$ of $I$ satisfy 
\[
C_t\big( F_t(I) \big) = C_t(I) \textrm{ and } D_t\big( F_t(I) \big) = D_t(I).
\]
\end{proposition}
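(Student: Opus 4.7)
The plan is to mirror the argument used for Proposition \ref{prop:container-closure}: show that running \Cref{alg:FingerprintContainerQueries} on $F_t(I)$ instead of on $I$ causes the algorithm to make exactly the same choices during its first $t$ iterations, and hence produce identical fingerprints and identical inner and outer containers. The extra content relative to the earlier proposition is minor (two fingerprint vertices per iteration rather than $q-1$ and two containers rather than one), so no new conceptual machinery is needed.

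First I would dispose of the boundary case $t > |I|/2$. In that range the extended definitions give $F_t(I) = I$, $C_t(I) = I$, and $D_t(I) = D_{|I|/2}(I)$, and one simply applies the algorithm to $F_t(I) = I$ to get the identity immediately (or reduces to the $t = |I|/2$ case, which is the main regime).

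For $t \le |I|/2$, the core of the argument is an induction on $j = 0,1,\ldots,t$ showing that executing the algorithm on input $F_t(I)$ produces, after $j$ iterations, the same tuple $(F_j, C_j, D_j)$ as running it on input $I$. The base case $j=0$ is immediate from the initialization $F_0 = \emptyset$, $C_0 = D_0 = V$. For the inductive step, assume the runs have agreed through iteration $j-1$. In the run on $I$, iteration $j$ selects $u_j$ as the vertex in $I \setminus F_{j-1}$ of largest degree in $G[C_{j-1}]$, and $v_j$ as the vertex in $I \setminus F_{j-1} \setminus \{u_j\}$ of largest degree in $G[D_{j-1}]$, with ties broken by the fixed ordering of $V$. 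In the run on $F_t(I)$, the corresponding vertices are chosen from $F_t(I) \setminus F_{j-1}$ and $F_t(I) \setminus F_{j-1} \setminus \{u_j\}$ respectively. Since $j \le t$, we have $\{u_j, v_j\} \subseteq F_j(I) \subseteq F_t(I)$, so both $u_j$ and $v_j$ are available. Moreover $F_t(I) \subseteq I$, so the candidate sets on the $F_t(I)$ side are subsets of those on the $I$ side; as $u_j$ and $v_j$ already attain the respective maxima on the larger candidate sets, they also attain them on the smaller subsets containing them, and the fixed tie-breaking rule forces the same choice. Thus iteration $j$ adds exactly $\{u_j, v_j\}$ to the fingerprint in both runs.

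Once the selected vertices agree, the updates to $C_j$ and $D_j$ depend only on $C_{j-1}$, $D_{j-1}$, $u_j$, $v_j$, and the graph $G$, which are identical in the two runs, so $C_j$ and $D_j$ coincide as well, completing the induction. Taking $j = t$ then yields $F_t(F_t(I)) = F_t(I)$, $C_t(F_t(I)) = C_t(I)$, and $D_t(F_t(I)) = D_t(I)$. There is no real obstacle here; the only subtle point is ensuring that the deterministic tie-breaking is the same in both executions, which is guaranteed by fixing an a priori ordering on $V$ as the algorithm description already does.
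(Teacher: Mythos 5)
Your proposal is correct and follows essentially the same route as the paper's proof, which simply asserts that the algorithm run on $F_t(I)$ selects the same vertices $u_1,v_1,\dots,u_t,v_t$ and hence forms the same containers; your explicit induction on the iteration index, together with the observation about maxima being preserved on subsets and the fixed tie-breaking order, just spells out that assertion in more detail.
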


\begin{proof}
If $t> |I|/2$ then $F_t(I)=I$ and the statements hold.
Now consider the case that $t \leq |I|/2.$ 
If $u_1,v_1,\dots,u_t,v_t$ are the vertices selected in the first $t$ iterations of the container algorithm on input $I$, then the algorithm selects the same vertices and forms the same containers $C_t,D_t$ when provided with input $F_t(I)$ instead of $I$.
\end{proof}

\subsection{Container Shrinking Lemma}
\label{sect:shrinking-lemma-queries}
In order to prove our graph container lemma, we need the following lemma.
This lemma states that, under the right conditions, the outer container shrinks relatively quickly.
\begin{lemma}[Outer Container Shrinking Lemma]
\label{lem:containers-shrinking2}
Let $G = (V,E)$ be a graph on $n$ vertices that is $\epsilon$-far from the \rhoIndepSet property,  let $I$ be an independent set in $G$, and let $t < |I|/2$ be an index for which the $t$-th container $D_{t}$ of $I$ has cardinality $|D_{t}| \ge \rho n$. For any $\alpha \leq \sqrt{\epsilon}/2$, if there exists a subset $D \subseteq D_{t+1}$ of $(\rho - \alpha) n$ vertices where $G[D]$ contains at most $\frac{3}{8} \epsilon n^2$ edges and $|D \cap C_{t+1}| \geq \left(\rho-\sqrt{\epsilon}/2\right)n,$ then
\[
\left|D_{t+1} \setminus D\right| \le \left( 1 - \frac{\epsilon}{4\rho \alpha} \right) \left| D_{t} \setminus D \right|.
\]
\end{lemma}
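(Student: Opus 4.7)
The plan is to combine two structural facts about the fingerprint vertices $u_{t+1}, v_{t+1}$ selected at iteration $t+1$. First, since $D \subseteq D_{t+1}$, no vertex of $D$ can be adjacent to either $u_{t+1}$ or $v_{t+1}$---otherwise that $D$-vertex would have been removed when forming $D_{t+1}$ from $D_t$. Writing $B := D_t \setminus D$, it follows that every neighbor of $u_{t+1}$ or $v_{t+1}$ in $D_t$ lies in $B$, and hence the one-step shrinkage satisfies
\[
|D_t \setminus D| - |D_{t+1} \setminus D| = \big|\{w \in D_t : (u_{t+1},w) \in E \text{ or } (v_{t+1},w) \in E\}\big|.
\]
It therefore suffices to prove that $\deg_{G[D_t]}(v_{t+1}) \ge \frac{\epsilon |B|}{4\rho\alpha}$, since all of $v_{t+1}$'s neighbors in $D_t$ contribute to this set and lie in $B$.

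Next I would use the $\epsilon$-far hypothesis to lower-bound the edges of $G[D_t]$ incident to $B$. Since $|D_t| \ge \rho n$ and $|B| \ge \alpha n$, for any $A \subseteq B$ of size $\alpha n$ the set $D \cup A$ is a $\rho n$-subset of $V$, so $|E(G[D \cup A])| \ge \epsilon n^2$. Subtracting the assumption $|E(G[D])| \le \frac{3}{8}\epsilon n^2$ yields $|E(D,A)| + |E(G[A])| \ge \frac{5}{8}\epsilon n^2$, and averaging over a uniformly random choice of $A \subseteq B$ of size $\alpha n$ upgrades this to
\[
|E(D,B)| + |E(G[B])| \ge \frac{5 \epsilon n |B|}{8 \alpha}.
\]

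The third step is to translate this edge-count bound into the required degree bound on $v_{t+1}$, using the hypothesis $|D \cap C_{t+1}| \ge (\rho - \sqrt{\epsilon}/2) n$. The high-degree removal rule in the construction of $C_{t+1}$ implies that every $w \in C_{t+1}$ satisfies $\deg_{G[D_t]}(w) \le \deg_{G[D_t]}(v_{t+1})$; this applies in particular to all of $D \cap C_{t+1}$ and to $B \cap C_{t+1}$. The two exceptional sets $D \setminus C_{t+1}$ and $B \setminus C_{t+1}$ must be treated separately: the former has size at most $(\sqrt{\epsilon}/2 - \alpha) n$ directly from the hypothesis, while the latter consists of $B$-vertices either absent from $C_t$ or explicitly removed for having $G[D_t]$-degree exceeding $\deg_{G[D_t]}(v_{t+1})$. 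Combining these estimates with the edge-count bound and rearranging yields the inequality $\deg_{G[D_t]}(v_{t+1}) \ge \frac{\epsilon |B|}{4\rho\alpha}$, as required.

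The main obstacle is this third step: the edge-count lower bound controls only the \emph{total} number of edges incident to $B$, but we must concentrate this mass onto the portion whose degrees are controlled by $v_{t+1}$. The numerical hypothesis $\alpha \le \sqrt{\epsilon}/2$ is precisely what is needed to keep $|D \setminus C_{t+1}|$ small, and the slack $\frac{5}{8} - \frac{1}{2} = \frac{1}{8}$ in the edge-count bound provides the room to absorb the contributions of the exceptional sets while retaining the desired shrinkage rate $\frac{\epsilon}{4\rho\alpha}$.
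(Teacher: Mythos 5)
Your first step and your handling of $D \setminus C_{t+1}$ are sound and match the paper's argument: the shrinkage of $D_t \setminus D$ equals the number of neighbours of $u_{t+1},v_{t+1}$ in $D_t$, all of which lie in $B = D_t \setminus D$, so it suffices to lower-bound $\deg_{G[D_t]}(v_{t+1})$; and indeed $|D\setminus C_{t+1}| \le (\sqrt{\epsilon}/2 - \alpha)n$ and every vertex surviving in $C_{t+1}$ has $G[D_t]$-degree at most $\deg_{G[D_t]}(v_{t+1})$ by the removal rule. The genuine gap is in how you pass from the $\epsilon$-farness to an edge bound. By averaging \emph{before} discarding the internal edges of $A$, your step 2 only yields $|E(D,B)| + |E(G[B])| \ge \frac{5\epsilon n|B|}{8\alpha}$, and the term $|E(G[B])|$ is not controlled by anything in the hypotheses: $|B|$ can be $\Theta(n)$, and the vertices of $B\setminus C_{t+1}$ may have been expelled from the inner containers in earlier iterations precisely because their degrees were too \emph{high} (or may correspond to vertices never constrained at all), so neither their number nor their degrees in $G[D_t]$ are bounded, and the same goes for the edges among them. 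Consequently your averaged inequality is consistent with $|E(D,B)|$ being far too small to force $\deg_{G[D_t]}(v_{t+1}) \ge \frac{\epsilon|B|}{4\rho\alpha}$, and the ``rearranging'' in step 3 cannot go through: the only edges you can charge to $\deg_{G[D_t]}(v_{t+1})$ are those with an endpoint in $D\cap C_{t+1}$ (at most $\rho n$ vertices of controlled degree), and the only edges you can discard by a size argument are those meeting $D\setminus C_{t+1}$; the mass sitting inside $B$ has nowhere to go.

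The fix is to use the bound on the internal edges at the per-set level, before averaging: for each $A \subseteq B$ with $|A| = \alpha n$ you have $|E(G[A])| \le \binom{\alpha n}{2} \le \epsilon n^2/8$ (this is one of the two places where $\alpha \le \sqrt{\epsilon}/2$ is really used), so $|E(D,A)| \ge \epsilon n^2/2$ for every such $A$, and averaging then gives $|E(D,B)| \ge \frac{\epsilon n |B|}{2\alpha}$ with no $|E(G[B])|$ term. From there your own accounting finishes the proof exactly as in the paper: at most $|D\setminus C_{t+1}|\cdot|B| \le \frac{\sqrt{\epsilon}}{2} n |B| \le \frac{\epsilon n |B|}{4\alpha}$ of these edges have their $D$-endpoint outside $C_{t+1}$, so at least $\frac{\epsilon n|B|}{4\alpha}$ edges run between $D\cap C_{t+1}$ and $B$; since $|D\cap C_{t+1}| \le |D| < \rho n$ and every such endpoint has $G[D_t]$-degree at most $\deg_{G[D_t]}(v_{t+1})$, this forces $\deg_{G[D_t]}(v_{t+1}) \ge \frac{\epsilon|B|}{4\rho\alpha}$, which combined with your step 1 gives the lemma.
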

\begin{proof}
    We first give a lower bound on the number of edges with one endpoint in $D$ and one endpoint in $D_t\setminus D.$
    Let $R$ be an arbitrary set of size $\alpha n$ selected from $D_t\setminus D.$
    Since $\alpha < \sqrt{\epsilon}/2,$ then $G[R]$ has at most $\binom{\alpha n}{2} <\epsilon n^2/8$ edges.
    Further, by the premises of the lemma statement, we have that $G[D]$ has at most $\frac{3}{8}\epsilon n^2$ edges.
    Combined with the facts that $|D \cup R| = \rho n,$ and that $G$ is $\epsilon$-far from having a $\rho n$ independent set, there are at least $\epsilon n^2/2$ edges in $G$ with one endpoint in $D$ and one endpoint in $R.$
    Since $R$ is selected arbitrarily, then there are at least $\frac{|D_t \setminus D|}{\alpha n} \cdot \epsilon n^2/2=\frac{\epsilon|D_t \setminus D|n}{2\alpha}$ edges in $G$ with one endpoint in $D$ and one endpoint in $D_t\setminus D.$

    Let $C=D \cap C_{t+1}.$
    We now claim that, of these edges, at least $\frac{\epsilon|D_t \setminus D|n}{4\alpha}$ have one endpoint in $C.$
    To prove this, first observe that there are at most $\sqrt{\epsilon} n/2$ vertices in $D \setminus C,$ and so the number of edges with one endpoint in $D \setminus C$ and one endpoint in $D_t \setminus D$ is at most $$\frac{\sqrt{\epsilon} n}{2} \cdot |D_t \setminus D| \leq \frac{\epsilon n}{4\alpha} \cdot |D_t \setminus D|,$$ where the inequality is by the fact that $\alpha \leq \sqrt{\epsilon}/2.$
    Since $D \setminus C$ and $C$ partition $D,$ then the remaining edges, of which there are at least $\frac{\epsilon n}{4\alpha} \cdot |D_t \setminus D|,$ have one endpoint in $C$ and one endpoint in $D_t \setminus D,$ which completes the proof of the claim.

    Using the above claim, there exists a vertex $v$ in $C$ with at least $\frac{\epsilon|D_t \setminus D|n}{4\alpha |C|}\geq\frac{\epsilon|D_t \setminus D|}{4\alpha\rho}$ neighbours in $D_t \setminus D.$
    Since $v \in C_{t+1} \subseteq C_t$ then $v_{t+1}$ has degree at least as large as $v,$ otherwise $v$ would be removed from $C_{t+1}.$
    And $v_{t+1}$ has no neighbours in $D$ because $D \subseteq D_{t+1},$ which means that $v_{t+1}$ has at least $\frac{\epsilon|D_t \setminus D|}{4\alpha\rho}$ neighbours in $D_t \setminus D,$ all of which are removed when constructing $D_{t+1}.$
\end{proof}

\subsection{Proof of Graph Container Lemma}
\label{sect:gcl-proof-queries}
In order to prove our graph container lemma, we need the following lemma from \cite{blaisSeth}.
It is important to note that the container $C_t$ in the lemma below refers to the container procedure defined in \cite{blaisSeth}.
However, our container procedure in \Cref{alg:FingerprintContainerQueries} contains the steps from the container procedure in \cite{blaisSeth}, and the above lemma also holds for the container $C_t$ defined in \Cref{alg:FingerprintContainerQueries}.
The proof follows exactly as in \cite{blaisSeth} and so is omitted here.

\begin{lemma}[Lemma 5 in \cite{blaisSeth}]
\label{lem:GCL-indepset}
Let $G = (V,E)$ be a graph on $n$ vertices that is $\epsilon$-far from the \rhoIndepSet property. For any independent set $I$ in $G$ there exists an index $t \le \frac{8 \rho^2}{\epsilon}  \ln(\frac{2\rho}\epsilon)$ such that the size of the $t$-th container of $I$ is bounded by
\begin{equation}
\label{eq:GCL-indepset}
|C_t| \le \left( \rho - t \cdot \frac{\epsilon}{8 \rho \ln(\frac{2\rho}{\epsilon})} \right) n
\end{equation}
and $G[C_t]$ contains at most $\frac{\epsilon n^2}{4}$ edges.
\end{lemma}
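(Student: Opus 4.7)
The plan is to mirror the outer-container argument behind \Cref{lem:containers-shrinking2}: first establish an inner-container shrinking lemma, then iterate it to produce an index $t$ satisfying both conclusions simultaneously.

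For the shrinking lemma, I would show that if $|C_t| \ge \rho n$, $\alpha \le \sqrt{\epsilon}/2$, and $D \subseteq C_{t+1}$ is a subset of size $(\rho - \alpha)n$ with $e(G[D]) \le \tfrac{3}{8}\epsilon n^2$, then $|C_{t+1} \setminus D| \le \bigl(1 - \tfrac{\epsilon}{4\rho\alpha}\bigr)|C_t \setminus D|$. The proof is essentially the same as that of \Cref{lem:containers-shrinking2}: an arbitrary $\alpha n$-subset $R$ of $C_t \setminus D$ has $|D \cup R| = \rho n$, so the $\epsilon$-far hypothesis combined with the sparsity of $G[D]$ and $G[R]$ yields at least $\tfrac{\epsilon n^2}{2}$ edges between $D$ and $R$. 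Averaging over $R$ then exhibits a vertex $v \in D$ with at least $\tfrac{\epsilon|C_t \setminus D|}{2\alpha\rho}$ neighbors in $C_t \setminus D$. Because $v \in C_{t+1}$ survived the high-degree pruning in \Cref{alg:FingerprintContainerQueries}, we have $\deg_{G[C_t]}(u_{t+1}) \ge \deg_{G[C_t]}(v)$; and since $D \subseteq C_{t+1}$, none of $u_{t+1}$'s neighbors in $C_t$ can lie in $D$, so all of them fall in $C_t \setminus D$ and are removed when forming $C_{t+1}$.

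Next I would apply this shrinking lemma iteratively. Let $\tau$ be the smallest index at which $e(G[C_\tau]) \le \tfrac{\epsilon n^2}{4}$; such a $\tau$ exists because $e(G[C_t])$ is non-increasing and becomes trivially small once $|C_t|$ drops below $\sqrt{\epsilon}\,n/2$. Take $D$ to be any $(\rho - \alpha)n$-subset of $C_\tau$ with $\alpha = \sqrt{\epsilon}/2$; its induced subgraph inherits at most $e(G[C_\tau]) \le \tfrac{\epsilon n^2}{4} < \tfrac{3}{8}\epsilon n^2$ edges, so the shrinking hypotheses hold at every $t < \tau$ for which $|C_t| \ge \rho n$. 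Iterating the shrinking bound yields the geometric decay $|C_t \setminus D| \le \bigl(1 - \tfrac{\epsilon}{4\rho\alpha}\bigr)^t n$. Matching this against the target slope $\delta = \tfrac{\epsilon}{8\rho\ln(2\rho/\epsilon)}$ and the threshold $T = \tfrac{8\rho^2}{\epsilon}\ln(2\rho/\epsilon)$, a direct comparison forces $\tau \le T$ while simultaneously giving $|C_\tau| \le (\rho - \tau\delta)n$, which is precisely the conclusion.

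The main obstacle I anticipate is coordinating the two conclusions at the \emph{same} index $t$: the shrinking lemma naturally controls the size but says nothing directly about the edge count. The resolution is to pivot around $\tau$, using the edge condition to define $\tau$ (so one conclusion holds by construction) and then the shrinking lemma to deliver the size bound at that same $\tau$. Choosing $\alpha$ just below $\sqrt{\epsilon}/2$ is load-bearing here, both to make the averaging argument work and to balance $(4\rho\alpha/\epsilon)\ln n$-style iteration counts against the $\ln(2\rho/\epsilon)$ factor in the final bound; pinning down this trade-off is the bookkeeping-heavy portion of the argument.
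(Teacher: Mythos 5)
Your inner-container shrinking lemma is sound, and indeed slightly simpler than \Cref{lem:containers-shrinking2} since the hypothesis $D \subseteq C_{t+1}$ makes the detour through $D \cap C_{t+1}$ unnecessary; note also that the paper itself does not prove \Cref{lem:GCL-indepset} (it imports it from \cite{blaisSeth}, remarking that the proof carries over to \Cref{alg:FingerprintContainerQueries}), so the right in-paper template is the proof of the outer analogue, \Cref{lem:GCL-indepset-queries}.

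The genuine gap is in how you deploy the shrinking lemma: you fix $\alpha = \sqrt{\epsilon}/2$ and take $D$ to be a $(\rho-\alpha)n$-subset of $C_\tau$. Such a $D$ need not exist, and when it does, $D \subseteq C_\tau$ forces $|C_\tau| \ge (\rho - \sqrt{\epsilon}/2)n$, so the desired conclusion $|C_\tau| \le (\rho - \tau\cdot\frac{\epsilon}{8\rho\ln(2\rho/\epsilon)})n$ could only hold if $\tau \le \frac{4\rho\ln(2\rho/\epsilon)}{\sqrt{\epsilon}}$ --- and nothing in your argument yields this, nor is it true in general. The shrinking lemma applies only while $|C_t| \ge \rho n$, so the geometric decay bounds the last index $t^*$ with $|C_{t^*}| \ge \rho n$; it says nothing about the phase $t^* < t \le \tau$ in which $|C_t| < \rho n$ but $G[C_t]$ still has more than $\epsilon n^2/4$ edges, and this phase can last on the order of $\rho^2/\epsilon$ iterations, far beyond $\frac{4\rho}{\sqrt{\epsilon}}\ln(2\rho/\epsilon)$. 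That phase is precisely where the full fingerprint/size trade-off (containers shrinking all the way toward size $0$ as $t$ approaches $\frac{8\rho^2}{\epsilon}\ln(2\rho/\epsilon)$) comes from, so it is not bookkeeping. The fix, visible in the proof of \Cref{lem:GCL-indepset-queries}, is to not fix $\alpha$: take $D = C_\tau$ itself, define $\alpha$ by $|C_\tau| = (\rho-\alpha)n$ after first showing $|C_\tau| \le (\rho - \frac{\epsilon}{2\rho})n$ (sparsity of $G[C_\tau]$ plus $\epsilon$-farness), apply your shrinking lemma for $t < t^*$ to get $t^* \le \frac{4\rho\alpha}{\epsilon}\ln(1/\alpha) \le \frac{4\rho\alpha}{\epsilon}\ln(2\rho/\epsilon)$, and handle $t^* < t \le \tau$ with a max-degree argument: while $e(G[C_{t-1}]) > \epsilon n^2/4$ and $|C_{t-1}| < \rho n$, either $u_t$ has degree at least $\frac{\epsilon n}{4\rho}$ in $G[C_{t-1}]$ or at least that many vertices exceed its degree and are pruned, so each such iteration removes at least $\frac{\epsilon n}{4\rho}$ vertices and $\tau - t^* \le \frac{4\rho\alpha}{\epsilon}$. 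Combining gives $\tau \le \frac{8\rho\alpha}{\epsilon}\ln(2\rho/\epsilon)$, which rearranges to the claimed bound on $|C_\tau|$ and gives $\tau \le \frac{8\rho^2}{\epsilon}\ln(2\rho/\epsilon)$ since $\alpha \le \rho$. (A minor additional point: $\tau$ exists not because $|C_t|$ drops below $\sqrt{\epsilon}\,n/2$ --- it need not --- but because of the convention $C_t = I$ for $t > |I|/2$, so eventually $e(G[C_t]) = 0$.)
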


We now prove our graph container lemma.
This lemma roughly states that there exists some $t$ value such that either $C_t$ is very small, or $D_t$ is somewhat small (which also implies that $C_t$ is somewhat small by the fact that $C_t \subseteq D_t$).

\begin{lemma}[Reformulation of \Cref{lem:ISstar} \protect\footnote{The collection of fingerprints/containers in \Cref{lem:ISstar} is constructed by applying \Cref{lem:GCL-indepset-queries} to every independent set in the graph. The properties of the containers follow from the basic properties of the container procedure and \Cref{prop:container-closure-queries}.}]
\label{lem:GCL-indepset-queries} 
Let $G = (V,E)$ be a graph on $n$ vertices that is $\epsilon$-far from the \rhoIndepSet property. For any independent set $I$ in $G$ there exists an index $t \le \frac{8 \rho^2}{\epsilon}  \ln(\frac{2\rho}\epsilon)$ such that either
\begin{itemize}
    \item $t \geq \frac{4\rho \ln(2\rho/\epsilon)}{\sqrt{\epsilon}}$ and $|C_t| \le \left( \rho - t \cdot \frac{\epsilon}{8 \rho \ln(\frac{2\rho}{\epsilon})} \right) n,$ or
    \item $t < \frac{4\rho \ln(2\rho/\epsilon)}{\sqrt{\epsilon}}$ and $|D_t| \le \left( \rho - t \cdot \frac{\epsilon}{8 \rho \ln(\frac{2\rho}{\epsilon})} \right) n.$
\end{itemize}
\end{lemma}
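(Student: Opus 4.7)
The plan is to combine Lemma~\ref{lem:GCL-indepset} (the inner container lemma from~\cite{blaisSeth}) with Lemma~\ref{lem:containers-shrinking2} (Outer Container Shrinking Lemma) via a case analysis. Let $\beta := \epsilon/(8\rho \ln(2\rho/\epsilon))$ and $t_0 := 4\rho \ln(2\rho/\epsilon)/\sqrt{\epsilon}$, noting that $t_0 \beta = \sqrt{\epsilon}/2$; this is precisely the threshold at which the target bound $(\rho - t\beta)n$ equals $(\rho - \sqrt{\epsilon}/2)n$. First, I apply Lemma~\ref{lem:GCL-indepset} to $I$ to obtain an index $\tau \leq 8\rho^2\ln(2\rho/\epsilon)/\epsilon$ with $|C_\tau| \leq (\rho - \tau\beta)n$ and $G[C_\tau]$ containing at most $\epsilon n^2/4$ edges.

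Two easy cases follow immediately. If $\tau \geq t_0$, the first conclusion of the lemma holds at $t = \tau$, so we are done. If $\tau < t_0$ and $|C_\tau| \leq (\rho - \sqrt{\epsilon}/2)n$, then by monotonicity of the inner containers $|C_{t_0}| \leq |C_\tau| \leq (\rho - t_0\beta)n$, so the first conclusion holds at $t = t_0$. The remaining case is $\tau < t_0$ with $|C_\tau| > (\rho - \sqrt{\epsilon}/2)n$; here, setting $\alpha^* := \rho - |C_\tau|/n$, we have $\alpha^* \in [\tau\beta, \sqrt{\epsilon}/2)$, and we must establish the second conclusion.

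For this core case, I apply Lemma~\ref{lem:containers-shrinking2} iteratively, taking $D := C_\tau$ and $\alpha := \alpha^*$. The preconditions at each step $t+1 \in \{1, \ldots, \tau\}$ follow by monotonicity: $D \subseteq C_\tau \subseteq C_{t+1} \subseteq D_{t+1}$, $|D \cap C_{t+1}| = |D| = (\rho - \alpha^*)n \geq (\rho - \sqrt{\epsilon}/2)n$, and $G[D] = G[C_\tau]$ has at most $\epsilon n^2/4 \leq \tfrac{3}{8}\epsilon n^2$ edges; the bound $|D_t| \geq \rho n$ is verified inductively from the shrinking bound itself. Iterating gives, for each $t \leq \tau$,
\[
|D_t| \leq (\rho - \alpha^*)n + \bigl(1 - \tfrac{\epsilon}{4\rho\alpha^*}\bigr)^t (1-\rho+\alpha^*)n,
\]
where a negative base is interpreted as $0$ (since the proof of Lemma~\ref{lem:containers-shrinking2} shows that $v_{t+1}$ is adjacent to all of $D_t \setminus D$ in that regime, forcing $D_{t+1} = D$).

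The principal obstacle is the final quantitative step: showing that this bound falls below $(\rho - t\beta)n$ for some $t \leq \tau$. When $\alpha^* \leq \epsilon/(4\rho)$ the shrinking rate is non-positive, so $|D_1| = (\rho - \alpha^*)n \leq (\rho - \beta)n$ already, and the second conclusion holds at $t = 1$. When $\alpha^*$ is larger, the positive shrinking rate drives $|D_t|$ down exponentially, and the identity $\sqrt{\epsilon}\,t_0/(2\rho) = 2\ln(2\rho/\epsilon)$ lets us compare the exponential decay $(1 - \epsilon/(4\rho\alpha^*))^t$ to the linear slack $\alpha^* - t\beta$. The hard part is handling the intermediate regime uniformly; one promising route is to pick $\alpha \in [\alpha^*, \sqrt{\epsilon}/2]$ adaptively (using a subset of $C_\tau$ of size $(\rho - \alpha)n$) so that the shrinking rate is always fast enough to outrun the linear threshold within $\tau$ iterations.
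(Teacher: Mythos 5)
Your two easy cases are fine, and your setup of the shrinking-lemma iteration (including the observation that $\alpha^* \le \epsilon/(4\rho)$ forces $D_1 = C_\tau$) is correct, but the core case has a genuine gap that is not just a missing computation: as structured, the argument cannot be completed. Because you take $D := C_\tau$ (or any subset of $C_\tau$), the precondition $D \subseteq D_{t+1}$ is only guaranteed via $C_\tau \subseteq C_{t+1} \subseteq D_{t+1}$, i.e.\ for $t+1 \le \tau$, so you get at most $\tau$ applications of \Cref{lem:containers-shrinking2}. When the index $\tau$ supplied by \Cref{lem:GCL-indepset} is small while $|C_\tau|$ is large (say $\tau = 1$ and $|C_1| = (\rho - \sqrt{\epsilon}/4)n$), a single iteration only certifies $|D_1| \le (\rho-\alpha^*)n + (1-\sqrt{\epsilon}/\rho)n \approx n$, nowhere near $(\rho - \beta)n$, and $D_1$ can genuinely be this large (it is $V$ minus the neighbourhoods of two vertices). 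Your ``adaptive $\alpha$'' idea does not help, since any $D \subseteq C_\tau$ inherits the same cap of $\tau$ iterations. The paper avoids this by not anchoring to the index from \Cref{lem:GCL-indepset} at all: it takes $t_c$ to be the \emph{maximal} index satisfying the conclusion of \Cref{lem:GCL-indepset}, so that in the hard case $C := C_{t_0}$ with $t_0 = 4\rho\ln(2\rho/\epsilon)/\sqrt{\epsilon}$ is sparse \emph{and} has size above $(\rho-\sqrt{\epsilon}/2)n$, and $C$ is contained in $D_{t+1}$ for every $t+1 \le t_0$, giving up to $t_0$ shrinking steps.

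The second missing ingredient is how to turn ``$|D_t|$ eventually becomes small'' into the required trade-off $|D_t| \le (\rho - t\beta)n$ at the witnessing index. The paper does this by introducing the stopping time $t_d$, the first $t$ with $\sum_{v \in C_t}\deg_{D_t}(v) \le \epsilon n^2/4$, showing that $|D_{t_d}| < (\rho - \epsilon/2\rho)n$ and that $G[D_{t_d}]$ has at most $\tfrac38 \epsilon n^2$ edges, and then applying \Cref{lem:containers-shrinking2} a \emph{second} time with $D = D_{t_d}$ itself (for $t \le t^*$, the last index with $|D_t|\ge\rho n$) together with a direct count of the iterations between $t^*$ and $t_d$, during each of which the degree-sum condition forces at least $\frac{\epsilon}{4\rho}n$ vertices out of $D_t$. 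This yields $t_d < \frac{8\rho\alpha}{\epsilon}\ln(2\rho/\epsilon)$ where $|D_{t_d}| = (\rho-\alpha)n$, which rearranges to the desired bound. Nothing in your sketch plays the role of $t_d$ or of the degree-sum quantity, and your comparison of the exponential decay against the linear slack $\alpha^* - t\beta$ anchored at $C_\tau$ does not produce this inequality in all regimes, as you acknowledge; to repair the proof you need both the maximal-index/$C_{t_0}$ anchoring and the $t_d$ machinery (or substitutes for them), not just a cleverer choice of $\alpha$.
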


\begin{proof}
    Throughout the proof we utilize three special $t$ values, so we start be defining these.
    \begin{itemize}
        \item Let $t_c$ be the largest index such that $|C_{t}| \leq \left( \rho - t \cdot \frac{\epsilon}{8 \rho \ln(\frac{2\rho}{\epsilon})} \right) n$ and $G[C_{t}]$ has less than $\epsilon n^2/4$ edges.
        Such an index exists by \Cref{lem:GCL-indepset}.
        \item Let $t_d$ be the smallest $t$ such that $\sum_{v \in C_{t_d}} \deg_{D_{t_d}}(v) \leq \epsilon n^2/4.$
        Such an index is guaranteed to exist since the vertices in $C_{\frac{|I|}{2}+1}=I$ have no edges to $D_{\frac{|I|}{2}+1}.$
        \item Let $t^*$ be the largest index smaller than $|I|/2$ for which $|D_{t^*}| \ge \rho n.$
    \end{itemize}
    
    Given these special $t$ values, first observe that if $t_c \geq \frac{4\rho\ln(2\rho/\epsilon)}{\sqrt{\epsilon}},$ then the conclusion of the lemma is satisfied.
    For the rest of the proof, consider the case that $t_c <\frac{4\rho\ln(2\rho/\epsilon)}{\sqrt{\epsilon}}.$
    We claim that $t_d < \frac{4\rho\ln(2\rho/\epsilon)}{\sqrt{\epsilon}}$ and $|D_{t_d}| \leq \left(\rho-t_d \cdot \frac{\epsilon}{8\rho \ln(2\rho/\epsilon)}\right)n,$ which would complete the proof of the lemma.

    We first show that $t_d < \frac{4\rho\ln(2\rho/\epsilon)}{\sqrt{\epsilon}}.$
    Let $C=C_{\frac{4\rho\ln(2\rho/\epsilon)}{\sqrt{\epsilon}}},$ and observe that $G[C]$ has less than $\epsilon n^2/4$ edges because $C \subseteq C_{t_c}.$
    Since we selected $t_c$ to be maximal, then $|C| > \left(\rho-\frac{\sqrt{\epsilon}}{2}\right)n$ because otherwise we could select a larger $t$ index.
    
    We claim that $t^* < \frac{2\rho\ln(2\rho/\epsilon)}{\sqrt{\epsilon}}.$
    In order to show this, apply \Cref{lem:containers-shrinking2} for all $t=0,1,\dots,t^*-1$ with $D=C$ to find that \[ |D_{t^*} \setminus C| \le \left(1 - \frac{\sqrt{\epsilon}}{2\rho}\right)^{t^*}n.\]
    If $t^* \geq \frac{2\rho\ln(2\rho/\epsilon)}{\sqrt{\epsilon}},$ then $|D_{t^*} \setminus C| \le \frac{\epsilon}{2\rho} n.$
    Using that $|C| < \left(\rho-\frac{\epsilon}{2\rho}\right)n,$ which follows from the fact that $G[C]$ has at most $\epsilon n^2/4$ edges and that $G$ is $\epsilon$-far from having a $\rho n$ independent set, then we find that $|D_{t^*}| = |D_{t^*} \setminus C| + |C| < \rho n,$ a contradiction.
    
    Consider now values $t = t^*+1, \dots, t_d-1,$ where if there are no $t$ values in this range then we have the desired upper bound on $t_d.$
    For each such $t$ we have that $\sum_{v \in C_{t}} \deg_{D_{t}}(v) > \epsilon n^2/4,$ and so there exists at least one vertex in $C_t$ with degree least $\frac{\epsilon}{4\rho} n$ in $G[D_t]$ since $|C_t| \leq |D_t| < \rho n.$
    This means that for all $t = t^*+1, \dots, t_d-1$ the fingerprint vertex $v_t$ has degree at least $\frac{\epsilon}{4\rho}n$ in $G[D_{t-1}]$ because otherwise there would be no vertices in $C_t$ with degree at least $\frac{\epsilon}{4\rho} n$ in $G[D_t]$, and so at least $\frac{\epsilon}{4\rho}n$ vertices are removed from $D_{t-1}$ to construct $D_t$ in the $t$-th iteration of \Cref{alg:FingerprintContainerQueries}.
    Since $|D_{t^*+1}| < \rho n,$ there can be at most $\frac{2\rho}{\sqrt{\epsilon}}$ such iterations before we reach a point that $|D_t| < \left(\rho-\frac{\sqrt{\epsilon}}{2}\right)n < |C|$.
    Hence, we find that $t_d < \frac{4\rho\ln(2\rho/\epsilon)}{\sqrt{\epsilon}}$ because otherwise we could show that $|D_{\frac{4\rho\ln(2\rho/\epsilon)}{\sqrt{\epsilon}}}| < |C| = |C_{\frac{4\rho\ln(2\rho/\epsilon)}{\sqrt{\epsilon}}}|,$ which is a contradiction.
    
    We now show that the cardinality of $D_{t_d}$ satisfies the desired bound.
    As a first step, we show that its cardinality is at most $\rho n.$
    First observe that since $t_d < \frac{4\rho\ln(2\rho/\epsilon)}{\sqrt{\epsilon}}$ then $C_{t_d} \supseteq C,$ and so $|C_{t_d}| \geq (\rho -\sqrt{\epsilon}/2)n.$
    Further, by the selection of $t_d,$ the vertices in $C_{t_d}$ are adjacent to at most $\epsilon n^2/4$ edges in $G[D_{t_d}].$ 
    As a result, $|D_{t_d}|< \left(\rho-\frac{\epsilon}{2\rho}\right) n$ because otherwise we could construct a $\rho n$ set from $C_{t_d},$ at most $\frac{\sqrt{\epsilon}}{2}n$ arbitrary other vertices from $D_{t_d},$ and at most $\frac{\epsilon}{2\rho}n$ other arbitrary vertices from $V$ to demonstrate that $G$ is not $\epsilon$-far from having a $\rho n$ independent set.
    Further, since $C_{t_d} \subseteq D_{t_d},$ we have that the number of edges in $G[D_{t_d}]$ is at most $\frac{3}{8}\epsilon n^2.$

    Define $\alpha$ such that $|D_{t_d}| = (\rho - \alpha) n$.
    We now determine an upper bound on $t_d$ as a function of $\alpha,$ which will give the upper bound on $|D_{t_d}|.$
    In particular, the setting $D=D_{t_d}$ satisfies the criteria of \Cref{lem:containers-shrinking2} for all $t=0,1,\dots,t^*-1,$ and so we find that
    \[
    |D_{t^*} \setminus D_{t_d}| \le \left(1 - \frac{\epsilon}{4\rho \alpha}\right)^{t^*}n.
    \]
    Since $|D_{t^*} \setminus D_{t_d}| \ge \alpha n$, we conclude that $t^* \le \frac{4\rho \alpha}{\epsilon} \ln(1/\alpha) \le \frac{4\rho \alpha}{\epsilon} \ln (2\rho/\epsilon),$ where the second inequality uses the lower bound $\alpha > \frac{\epsilon}{2\rho}$ coming from the fact that $|D_{t_d}|< \left(\rho-\frac{\epsilon}{2\rho}\right) n.$

    Consider now values $t = t^*+1,\dots,t_d-1.$
    Using the same reasoning as above, for each such $t$ we have that $\sum_{v \in C_{t}} \deg_{D_{t}}(v) > \epsilon n^2/4,$ and so there exists at least one vertex in $C_t$ with degree at least $\frac{\epsilon}{4\rho} n$ in $G[D_t]$ since $|C_t| \leq |D_t| < \rho n.$
    This means that for all $t = t^*+1, \dots, t_d-1$ the fingerprint vertex $v_t$ has degree at least $\frac{\epsilon}{4\rho}n$ in $G[D_{t-1}]$ because otherwise there would be no vertices in $C_t$ with degree at least $\frac{\epsilon}{4\rho} n$ in $G[D_t]$, and so at least $\frac{\epsilon}{4\rho}n$ vertices are removed from $D_{t-1}$ to construct $D_t$ in the $t$-th iteration of \Cref{alg:FingerprintContainerQueries}.
    Since $|D_{t^*+1}| < \rho n$ there can be at most $\frac{4\rho \alpha}{\epsilon}$ such iterations before we reach $|D_t| = (\rho-\alpha)n=|D_{t_d}|$.
    Therefore, $t_d \leq \frac{4\rho \alpha}{\epsilon} \ln (2\rho/\epsilon)+ \frac{4\rho \alpha}{\epsilon} < \frac{8\rho \alpha}{\epsilon} \ln (2\rho/\epsilon),$ which can be rearranged to show the desired upper bound on the size of the container $D_{t_d}$.
\end{proof}

\subsection{Proof of \autoref{thm:cliques-queries}}
\label{sect:clique-queries-proof}
We are now ready to use \Cref{lem:GCL-indepset-queries} to complete the proof of \Cref{thm:cliques-queries}. The proof of the theorem also makes use of the following form of Chernoff's bound for hypergeometric distributions (see, for example, \cite{blaisSeth}).

\begin{lemma}[Chernoff's Bound]
\label{lem:chernoff}
Let $X$ be drawn from the hypergeometric distribution $H(N,K,n)$ where $n$ elements are drawn without replacement from a population of size $N$ where $K$ elements are marked and $X$ represents the number of marked elements that are drawn.
Then for any $\vartheta \geq E[X]$,
\[
\Pr\big[X \geq \vartheta\big] \leq \exp\left(-\frac{(\vartheta-E[X])^2}{\vartheta+E[X]}\right).
\]
\end{lemma}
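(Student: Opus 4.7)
The plan is to reduce the hypergeometric tail bound to the well-known multiplicative Chernoff bound for the binomial distribution, via Hoeffding's classical reduction, and then rearrange algebraically into the stated form.

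First I would invoke Hoeffding's inequality for sampling without replacement: if $X \sim H(N,K,n)$ and $Y = \sum_{i=1}^n Y_i$ is a sum of i.i.d.\ Bernoulli variables with success probability $p := K/N$, then for every convex function $\phi$ one has $E[\phi(X)] \le E[\phi(Y)]$. Applying this with $\phi(x) = e^{sx}$ for $s > 0$ shows that the moment generating function of $X$ is dominated by that of a binomial $\mathrm{Bin}(n,p)$ with the same mean $\mu := E[X] = nK/N$.

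Next, I would run the standard MGF-based Chernoff argument. For any $s > 0$, Markov's inequality gives
\[
\Pr[X \ge \vartheta] \;\le\; e^{-s\vartheta}\, E[e^{sX}] \;\le\; e^{-s\vartheta}\bigl(1 + p(e^s-1)\bigr)^n \;\le\; \exp\!\bigl(\mu(e^s-1) - s\vartheta\bigr),
\]
where the last inequality uses $1+x \le e^x$. Optimizing the exponent by taking $e^s = \vartheta/\mu \ge 1$ (which is permitted since $\vartheta \ge \mu$) yields the standard multiplicative Chernoff bound
\[
\Pr[X \ge (1+\delta)\mu] \;\le\; \exp\!\left(-\frac{\delta^2\, \mu}{2+\delta}\right) \qquad \text{for every } \delta \ge 0.
\]

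Finally, I would translate this into the claimed form by substituting $\vartheta = (1+\delta)\mu$, so that $\delta = (\vartheta - \mu)/\mu$; then $\delta^2 \mu = (\vartheta - \mu)^2/\mu$ and $2+\delta = (\vartheta + \mu)/\mu$, hence $\delta^2 \mu/(2+\delta) = (\vartheta - \mu)^2/(\vartheta + \mu)$, which matches the exponent in the lemma exactly.

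The only mildly delicate point is the passage from the sharp Kullback-Leibler form of the Chernoff bound, $\exp(-\mu \cdot [(1+\delta)\ln(1+\delta) - \delta])$, to the weaker $\delta^2\mu/(2+\delta)$ form; this reduces to the elementary calculus inequality $(1+\delta)\ln(1+\delta) - \delta \ge \delta^2/(2+\delta)$ for all $\delta \ge 0$, which I would verify by noting both sides vanish at $\delta = 0$ and comparing derivatives. Since none of these ingredients is novel, and the paper already writes ``see, for example, \cite{blaisSeth},'' an equally acceptable route is simply to appeal to the proof recorded there.
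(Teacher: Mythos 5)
Your proof is correct. The paper itself does not prove this lemma at all --- it simply states it and points to the reference \cite{blaisSeth}, so there is no in-paper argument to compare against; your derivation is a complete, self-contained substitute for that citation. The route you take is the standard one and every step checks out: Hoeffding's convex-ordering reduction legitimately transfers the moment generating function bound from the hypergeometric to the binomial with the same mean $\mu = nK/N$; the exponent $\mu(e^s-1) - s\vartheta$ optimized at $e^s = \vartheta/\mu$ gives the Kullback--Leibler form $-\mu\big[(1+\delta)\ln(1+\delta)-\delta\big]$; the calculus inequality $(1+\delta)\ln(1+\delta)-\delta \ge \delta^2/(2+\delta)$ follows as you say (e.g.\ via $\ln(1+\delta)\ge 2\delta/(2+\delta)$ after differentiating); and the substitution $\delta = (\vartheta-\mu)/\mu$ turns $\delta^2\mu/(2+\delta)$ into exactly $(\vartheta-\mu)^2/(\vartheta+\mu)$. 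The only boundary case, $\vartheta=\mu$ (where the optimal $s$ is $0$), is trivial since both sides are then at most $1$, so nothing is lost. In short: correct, and arguably more informative than the paper's treatment, which delegates the proof entirely to the cited work.
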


We are now ready to prove \Cref{thm:cliques-queries}, restated below in its precise form (with the polylogarithmic term) and for the \rhoIndepSet property. 

\newtheorem*{clique-thm-precise}{Theorem 4}
\begin{clique-thm-precise}[Precise formulation]
The query complexity of $\epsilon$-testing the \rhoIndepSet property is 
\[
\mathcal{Q}_{\rhoIndepSet}(n,\epsilon) = O\left(\frac{\rho^5}{\epsilon^{7/2}} \ln^5\left(\frac 1\epsilon \right)\right).
\]
\end{clique-thm-precise}

\begin{proof}
    Let $R,S$ be a random sets of vertices drawn uniformly at random from $V$ without replacement, of sizes $r = c_1 \cdot \frac{\rho^2}{\epsilon^{3/2}} \ln^2\left(\frac 1\epsilon \right)$ and $s = c_2 \cdot \frac{\rho^3}{\epsilon^2} \ln^3\left(\frac 1\epsilon \right)$ respectively, where $c_1,c_2$ are large enough constants.
    Since the problem is non-trivial only when $\epsilon < \frac{\rho^2}{2},$ then we have that $s > r.$

    The tester queries all pairs of vertices in $R,$ and all pairs of vertices with one in vertex in $R$ and one vertex in $S.$
    Hence, the total number of edge queries is $r^2+rs = O\left(\frac{\rho^5}{\epsilon^{7/2}} \ln^5\left(\frac 1\epsilon \right)\right).$
    The tester accepts if there exists an independent set $I$ of size $\rho r$ in $R$ and a set $J$ of size $\rho s$ in $S$ such that there are no edges between a vertex in $I$ and a vertex in $J.$
    
    If $G$ contains a $\rho n$ independent set, then $R$ contains at least $\rho r$ vertices from this independent set with probability at least $\frac12$ since the number of such vertices follows a hypergeometric distribution, and so the median of this distribution is at least $\rho s$~\cite{Neumann70,KaasBurhman1980}.
    Similarly, $S$ contains at least $\rho s$ vertices from the independent set with probability at least $\frac12,$ and so the tester accepts with probability at least $\frac14.$

    In the remainder of the proof we upper bound the probability that the tester accepts when $G$ is $\epsilon$-far from containing a $\rho n$-independent set.
    For the rest of the argument, let us call a pair of containers $C_t,D_t$ \emph{small} when they satisfy the expression in the conclusion of \Cref{lem:GCL-indepset-queries}.
    In particular, we call a pair of containers $C_t,D_t$ \emph{small} when either:
    \begin{itemize}
        \item $t \geq \frac{4\rho \ln(2\rho/\epsilon)}{\sqrt{\epsilon}}$ and $|C_t| \le \left( \rho - t \cdot \frac{\epsilon}{8 \rho \ln(\frac{2\rho}{\epsilon})} \right) n,$ or
        \item $t < \frac{4\rho \ln(2\rho/\epsilon)}{\sqrt{\epsilon}}$ and $|D_t| \le \left( \rho - t \cdot \frac{\epsilon}{8 \rho \ln(\frac{2\rho}{\epsilon})} \right) n.$
    \end{itemize}

    Let us denote the vertices of the sampled set $R$ as $u_1,u_2,\ldots,u_{r}.$
    First, by \Cref{lem:GCL-indepset-queries}, for \emph{every} independent set $I$ of size $\rho r$ in $R$, there is a $t \le \frac{8\rho^2 \ln(\frac{2\rho}{\epsilon})}{\epsilon}$ and a fingerprint $F_t \subseteq I$ of size $2t$ that defines a small pair of containers $C_t(F_t)$ and $D_t(F_t)$ such that $I \subseteq C_t(F_t)$ and $D_t$ contains all vertices in $V$ that have no edges to $I.$
    Hence, the probability that the tester accepts is at most the probability that there exists some $t \le \frac{8\rho^2 \ln(\frac{2\rho}{\epsilon})}{\epsilon}$ such that $R$ contains a set of $2t$ vertices that form the fingerprint $F_t$ of a pair of small containers $C_t,D_t,$ and $R$ contains at least $\rho r - 2t$ other vertices from $C_t$ and $S$ contains at least $\rho s$ vertices from $D_t.$
    We now upper bound this probability.

    For any $2t$ distinct indices $i_1,i_2,\ldots,i_{2t} \in [r]$, consider the event where the vertices $u_{i_1},\ldots,u_{i_{2t}}$ form the fingerprint $F_t$ of a small pair of containers $C_t,D_t$. Let $X$ denote the number of vertices among the other $r-2t$ sampled vertices that belong to $C_t,$ and let $Y$ denote the number of vertices in the $S$ sample that belong to $D_t.$ We consider two cases.
    
    If $t \geq \frac{4\rho\ln(2\rho/\epsilon)}{\sqrt{\epsilon}},$ then we find that the expected value of $X$ is
    \[
    \E[X] \leq \left(\rho-\frac{t\epsilon}{8\rho\ln(\frac{2\rho}{\epsilon})}\right)(r-2t) <  \rho r - \frac{t\epsilon r}{8\rho\ln(\frac{2\rho}{\epsilon})}
    \le \rho r - 2t - \frac{t\epsilon r}{16 \rho\ln(\frac{2\rho}{\epsilon})},
    \]
    where the second inequality uses that $r-2t < r,$ and the last inequality uses the fact that $r \geq \frac{32 \rho}{\epsilon}\ln(\frac{2\rho}{\epsilon})$ because $\epsilon < \frac{\rho^2}{2}$ and $c_1$ is a large enough constant.

    By the Chernoff bound in \Cref{lem:chernoff}, the probability that we draw at least $\rho r - 2t$ vertices from $C_t$ in the final $r - 2t$ vertices drawn to form $R$ is
    \[
    \Pr[ X \ge \rho r - 2t] 
    \le \exp\left(\frac{-(\rho r-2t-E[X])^2}{\rho r - 2t+E[X]}\right)
    \leq \exp\left(-\frac{t^2\epsilon^2r}{512 \rho^3\ln^2(\frac{2\rho}{\epsilon})} \right).
    \]

    Next, if $t < \frac{4\rho\ln(2\rho/\epsilon)}{\sqrt{\epsilon}},$ then we find that the expected value of $Y$ is
    \[
    \E[Y] \leq \left(\rho-\frac{t\epsilon}{8\rho\ln(\frac{2\rho}{\epsilon})}\right)s =  \rho s - \frac{t\epsilon s}{8\rho\ln(\frac{2\rho}{\epsilon})}.
    \]

    By the Chernoff bound in \Cref{lem:chernoff}, the probability that we draw at least $\rho s$ vertices from $D_t$ in the $s$ vertices drawn to form $S$ is
    \[
    \Pr[ Y \ge \rho s] 
    \le \exp\left(\frac{-(\rho s-E[X])^2}{\rho s +E[X]}\right)
    \leq \exp\left(-\frac{t^2\epsilon^2s}{64 \rho^3\ln^2(\frac{2\rho}{\epsilon})} \right).
    \]

    Therefore, by applying a union bound over all values $t \le \frac{8\rho^2 \ln(\frac{2\rho}{\epsilon})}{\epsilon}$ and all possible choices of $2t$ indices from $[r]$, the probability that \emph{any} set of at most $\frac{16\rho^2 \ln(\frac{2\rho}{\epsilon})}{\epsilon}$ vertices in $R$ form the fingerprint of a small pair of containers from which we sample at least $\rho r$ vertices in $R$ and $\rho s$ vertices in $S$ is at most

    \[
    \sum_{t=1}^{4\rho \ln(2\rho/\epsilon)/\sqrt{\epsilon}} \binom{r}{2t} \exp\left(-\frac{t^2\epsilon^2s}{64 \rho^3\ln^2(\frac{2\rho}{\epsilon})} \right) + \sum_{t=4\rho \ln(2\rho/\epsilon)/\sqrt{\epsilon}}^{8\rho^2 \ln(2\rho/\epsilon)/\epsilon} \binom{r}{2t} \exp\left(-\frac{t^2\epsilon^2r}{512 \rho^3\ln^2(\frac{2\rho}{\epsilon})} \right)
    \]
    \[
    \le \sum_{t=1}^{4\rho \ln(2\rho/\epsilon)/\sqrt{\epsilon}} \exp\left( 2t \ln r - \frac{t^2\epsilon^2s}{64 \rho^3\ln^2(\frac{2\rho}{\epsilon})} \right) +  \sum_{t=4\rho \ln(2\rho/\epsilon)/\sqrt{\epsilon}}^{8\rho^2 \ln(2\rho/\epsilon)/\epsilon}\exp\left( 2t \ln r - \frac{t^2\epsilon^2r}{512 \rho^3\ln^2(\frac{2\rho}{\epsilon})} \right)
    \]
    where the inequality uses the upper bound $\binom{r}{2t} \le r^{2t}.$
    Finally, using that $r = c_1 \cdot \frac{\rho^2}{\epsilon^{3/2}} \ln^2(\frac{1}{\epsilon})$ and $s = c_2 \cdot \frac{\rho^3}{\epsilon^2} \ln^3(\frac 1\epsilon)$ for large enough constants $c_1,c_2,$ and the fact that $t \geq \frac{4\rho}{\sqrt{\epsilon}}\ln(\frac{2\rho}{\epsilon})$ for the summation on the right, the above expression can be upper bounded by $o(1).$
    Hence, the probability that the tester accepts when $G$ is $\epsilon$-far from having a $\rho n$ independent set is $o(1).$
\end{proof}

\bibliographystyle{alpha}
\bibliography{references}

\end{document}